\documentclass[11pt]{article}

\usepackage{natbib}
\usepackage{booktabs}
\usepackage{multirow}
\usepackage{graphicx}
\usepackage{amsfonts}
\usepackage{amsmath}
\usepackage{amsthm}
\usepackage{vmargin}
\usepackage{times}
\usepackage{float}
\usepackage[colorlinks]{hyperref} \hypersetup{colorlinks=true, citecolor=red, breaklinks=true}
\usepackage[font={it}]{caption}

\newtheorem{proposition}{Proposition}

\newtheorem{hypothesis}{Hypothesis}

\DeclareMathOperator*{\argmax}{arg\,max}

\begin{document}

\title{Inducing Honest Reporting Without Observing Outcomes: An Application to the Peer-Review Process}

\author{
        Arthur Carvalho \\
        University of Waterloo\\
        a3carval@uwaterloo.ca
 \and
        Stanko Dimitrov\\
        University of Waterloo\\
        sdimitro@uwaterloo.ca
 \and
        Kate Larson \\
        University of Waterloo\\
        klarson@uwaterloo.ca
}

\date{\today}

\maketitle

\begin{abstract}
When eliciting opinions from a group of experts, traditional devices used to promote honest reporting assume that there is an observable future outcome. In practice, however, this assumption is not always reasonable. In this paper, we propose a scoring method built on strictly proper scoring rules to induce honest reporting without assuming observable outcomes. Our method provides scores based on pairwise comparisons between the reports made by each pair of experts in the group. For ease of exposition, we introduce our scoring method  by illustrating its application to the peer-review process. In order to do so, we start by modeling the peer-review process using a Bayesian model where the uncertainty regarding the quality of the manuscript is taken into account. Thereafter, we introduce our scoring method to evaluate the reported reviews. Under the assumptions that reviewers are Bayesian decision-makers and that they cannot influence the reviews of other reviewers, we show that risk-neutral reviewers strictly maximize their expected scores by honestly disclosing their reviews. We also show how the group's scores can be used to find a consensual review. Experimental results show that encouraging honest reporting through the proposed scoring method creates more accurate reviews than the traditional peer-review process.
\end{abstract}

\section{Introduction}
\label{sec:introduction}

In the absence of a well-chosen incentive structure, experts are not necessarily honest when reporting their opinions. For example, when reporting subjective probabilities,  experts who have a reputation to protect might tend to produce forecasts near the most likely group consensus, whereas experts who have a reputation to build might tend to overstate the probabilities of outcomes they feel will be understated in a possible consensus~\citep{Nakazono:2013}. Hence, an important question when eliciting experts' opinions is how to incentivize honest reporting.

\emph{Proper scoring rules}  \citep{Winkler:1968} are traditional devices that incentivize honest reporting of subjective probabilities, \textit{i.e.}, experts maximize their expected scores by honestly reporting their opinions. However, proper scoring rules rely on the assumption that there is an observable future outcome, which is not always a reasonable assumption. For example, when market analysts provide sales forecasts on a potential new product, there is no guarantee that the product will ever be produced. Hence, the actual number of sales may never be observed.

In this paper, we propose a scoring method for promoting honest reporting amongst a group of experts when future outcomes are unobservable. In particular, we are interested in settings where experts observe signals from a multinomial distribution with an unknown parameter. Honest reporting then means that experts report exactly the signals that they observed. Our scoring method is built on proper scoring rules. However, different than what is traditionally assumed in the proper scoring rules literature, our method does not assume that there is an observable future outcome. Instead, scores are determined based on pairwise comparisons between experts' reported opinions.

The proposed method may be used in a variety of settings, \textit{e.g.}, strategic planning, reputation systems, peer review, \textit{etc}.  When applied to strategic planning, the proposed method may induce honest evaluation of different strategic plans. A strategic plan is a systematic and coordinated way to develop a direction and a course for an organization, which includes a plan to allocate the organization's resources \citep{Argenti:1968}. After a candidate strategic plan is discarded, it becomes nearly impossible to observe what would be the consequences of that plan because strategic plans are long-term in nature. Hence, a method to incentivize honest evaluations of candidate strategic plans cannot assume that the result of a strategic plan is observable in the future.

Our method can also be applied to reputation systems to elicit honest feedback. In reputation systems, individuals rate a product/service after experiencing it, \textit{e.g.}, customer product reviews on Amazon.com are one such reputation system. Due to the subjective nature of this task, incentives for honest feedback should not be based on the assumption that an absolute rating exists.

For ease of exposition, we introduce our scoring method  by illustrating its application to a domain where traditionally there are no observable outcomes: the peer-review process. Peer review is a process in which an expert's output is scrutinized by a number of other experts with relevant expertise in order to ensure quality control and/or to provide credibility. Peer review is commonly used when there is no objective way to measure the output's quality, \textit{i.e.}, when quality is a subjective matter. Peer review has been widely used in several professional fields, \textit{e.g.}, accounting \citep{AICPA:2012}, law \citep{LSC:2005}, health care \citep{Dans:1993}, \textit{etc}.

Currently, a popular application of the peer-review process is in online education. Recent years have seen a surge of massive online open courses, \textit{i.e.}, free online academic courses aimed at large-scale participation. Some of these courses have attracted tens of thousands of students \citep{Pappano:2012}. One of the biggest challenges faced by online educators brought by this massive number of students is the grading process since the available resources (personnel, time, \textit{etc}.) is often insufficient. Auto-grading by computers is not always feasible, \textit{e.g.}, courses whose assignments consist of essay-style questions and/or questions that do not have clear right/wrong answers. Peer review has been used by some companies like Coursera\footnote{http://www.coursera.org/} as a way to overcome this issue.

For simplicity's sake, we focus on peer review as used in modern scientific communication. The process, as we consider in this paper, can be described as follows: when a manuscript first arrives at the editorial office of an academic journal, it is first examined by the editor, who might reject the manuscript immediately because either it is out-of-scope or because it is of unacceptable quality. Manuscripts that pass this first stage are then sent out to experts with relevant expertise who are usually asked to classify the manuscript as publishable immediately, publishable after some revisions, or not publishable at all. Traditionally, the manuscript's authors do not know the reviewers' identities, but the reviewers may or may not know the identity of the authors.

In other words, peer review can be seen as a decision-making process where the reviewers serve as cognitive inputs that help a decision maker (chair, editor, course instructor, \textit{etc}.) judge the quality of a peer's output. A crucial point in this process is that it greatly depends on the  reviewers' honesty. In the canonical peer-review process, reviewers have no direct incentives for honestly reporting their reviews. Several potential problems have been discussed in different research areas, \textit{e.g.}, bias against female authors, authors from minor institutions, and non-native English writers~\citep{Bornmann:2007, Wenneras:1997, Primack:2008, Newcombe:2009}.

In order to illustrate the application of our method to peer review, we start by modeling the peer-review process as a Bayesian model so as to take the uncertainty regarding the quality of the manuscript into account. We then introduce our scoring method to evaluate reported reviews. We assume that the  scores received by reviewers are somehow coupled with relevant incentives, be they social-psychological, such as praise or visibility, or material rewards through prizes or money. Hence, we naturally assume that reviewers seek to maximize their expected scores and that there are no external incentives. We show that reviewers strictly maximize their expected scores by honestly disclosing their reviews under the additional assumptions that they are Bayesian decision-makers and that they cannot influence the reviews of other reviewers.

Honesty is intrinsically related to accuracy in our peer-review model: as the number of honest reviews increases, the distribution of the reported reviews converges to the probability distribution that represents the quality of the manuscript. We performed peer-review experiments to validate the model and to test the efficiency of the proposed scoring method. Our experimental results corroborate our theoretical model by showing that the act of encouraging honest reporting through the proposed scoring method creates more accurate reviews than the traditional peer-review process, where reviewers have no direct incentives for expressing their true reviews.

In addition to our method for inducing honest reporting, we also propose a method to aggregate opinions that uses information from experts' scores. Our aggregation method is general in a sense that it can be used in any decision-making setting where experts report probability distributions over the outcomes of a discrete random variable. The proposed method works as if the experts were continuously updating their opinions in order to accommodate the expertise of others. Each updated opinion takes the form of a linear opinion pool, where the weight that an expert assigns to a peer's opinion is inversely related to the distance between their opinions. In other words, experts are assumed to prefer opinions that are close to their own opinions, where closeness is defined by an underlying proper scoring rule. We provide conditions under which consensus is achieved under our aggregation method and discuss a behavioral foundation of it. Using data from our peer-review experiments, we find that the consensual review resulting from the proposed aggregation method is consistently more accurate than the canonical average review.

\section{Related Work}
\label{sec:related_work}

In recent years, two prominent methods to induce honest reporting without the assumption of  observable future outcomes were proposed: the \emph{Bayesian truth serum} (BTS) \emph{method} \citep{Prelec:2004} and the \emph{peer-prediction method} \citep{Miller:2005}.

The BTS method works on a single multiple-choice question with a finite number of alternatives. Each expert is requested to endorse the answer mostly likely to be true and to predict the empirical distribution of the endorsed answers. Experts are evaluated by the accuracy of their predictions as well as how surprisingly common their answers are. The surprisingly common criterion exploits the false consensus effect to promote truthfulness, \textit{i.e.}, the general tendency of experts to overestimate the degree of agreement that the others have with them.

The score received by an expert from the BTS method has two major components. The first one, called the information score, evaluates the answer endorsed by the expert according to the log-ratio of its actual-to-predicted endorsement frequencies. The second component, called the prediction score, is a penalty proportional to the relative entropy between the empirical distribution of answers and the expert's prediction of that distribution. Under the BTS scoring method, collective honest reporting is a Bayes-Nash equilibrium.

The BTS method has been used to promote honest reporting in many different domains,  \textit{e.g.}, when sharing rewards amongst a set of experts \citep{Carvalho:2011} and in policy analysis \citep{Weiss:2009}.   However, the BTS method has two major drawbacks. First, it requires the population of experts to be large. Second, besides reporting their opinions, experts must also make predictions about how their peers will report their opinions. While the artificial intelligence community has recently addressed the former issue  \citep{Witkowski:2012, Radanovic:2013}, the latter issue is still an intrinsic requirement for using the BTS method.

The drawbacks of the BTS method are not shared by the peer-prediction method \citep{Miller:2005}. In the peer-prediction method, a number of experts experience a product and rate its quality. A mechanism then collects the ratings and makes payments based on those ratings. The peer-prediction method makes  use of the stochastic correlation between the signals observed by the experts from the product to achieve a Bayes-Nash equilibrium  where every expert reports honestly.

A major problem with the peer-prediction method is that it depends on historical data. For example, when applied to a peer-review setting, after a reviewer $i$ reports his review, say $r_i$, the mechanism then estimates reviewer $i$'s prediction of the review reported by another reviewer $j$, $P(r_j | r_i)$, which is then evaluated and rewarded using a proper scoring rule and reviewer $j$'s actual reported review. The mechanism needs to have a history of previously reported reviews for computing $P(r_j | r_i)$, which is not always a reasonable assumption, \textit{e.g.}, when the evaluation criteria may change from review to review and when the peer-review process is being used for the first time. In other words, the peer-prediction method is prone to cold-start problems.

\cite{Carvalho:2012} addressed this issue by making the extra assumption that experts have uninformative prior knowledge about the distribution of the observed signals. Given this assumption, honest reporting is induced by simply making pairwise comparisons between reported opinions and rewarding agreements. In this paper, we extend the method by \cite{Carvalho:2012} in several ways. First, we show that the assumption of uninformative priors is unnecessary as long as experts have common prior distributions and this fact is common knowledge. Moreover, we provide stronger conditions with respect to the underlying proper scoring rule under which pairwise comparisons induce honest reporting.

Another contribution of our work is a method to aggregate the reported opinions into a single consensual opinion. Over the years, both behavioral and mathematical methods have been proposed to establish consensus \citep{Clemen:1999}. Behavioral methods attempt to generate agreement through interaction and exchange of knowledge. Ideally, the sharing of information leads to a consensus. However, behavioral methods usually provide no conditions under which experts can be expected to reach an agreement. On the other hand, mathematical aggregation methods consist of processes or analytical models that operate on the reported opinions in order to produce a single aggregate opinion. \cite{DeGroot:1974} proposed a model which describes how a group of experts can reach agreement on a consensual opinion by pooling their individual opinions. A drawback of DeGroot's method is that it requires each expert to explicitly assign weights to the opinions of other experts. In this paper, we propose a method to set these weights directly which takes the scores received by the experts into account. We also provide a behavioral interpretation of the proposed aggregation method.

A related method for finding consensus was proposed by  \cite{Carvalho:2013}. Under the assumption that experts prefer probability distributions close to their own distributions, where closeness is measured by the root-mean-square deviation, the authors showed that a consensus is always achieved. Moreover, if risk-neutral experts are rewarded using the quadratic scoring rule, then the assumption that experts prefer probability distributions that are close to their own distributions follows naturally. The approach in this paper is more general because the underlying proper scoring rule can be any bounded proper scoring rule.

From an empirical perspective, we investigate the efficiency of both our scoring method and our method for finding consensus in a peer-review experiment. Formal experiments involving peer review are still relatively scarce. Even though the application of the peer-review process to scientific communication can be traced back almost 300 years, it was not until the early 1990s that research on this matter became more intensive and formalized \citep{vanRooyen:2001}. Scientists in the biomedical domain have been in the forefront of research on the peer-review process due to the fact that dependable quality-controlled information can literally be a matter of life and death in this research field. In particular, the staff of the renowned BMJ, formerly British Medical Journal, have been studying the merits and limitations of peer review over a number of years \citep{Lock:1985, Godlee:2003}. Most of their work has focused on defining and evaluating review quality~\citep{vanRooyen:1999}, and examining the effect of specific interventions on the quality of the resulting reviews~\citep{vanRooyen:2001}.

One mechanism used to prevent bias in the peer-review process is called \emph{double-blind review}, which consists of hiding  both authors and reviewers' identities. Indeed, it has been reported that such a practice reduces bias against female authors \citep{Budden:2008}. However, it can be argued that knowing the authors' identities makes it easier for the reviewers to compare the new manuscript with previously published papers, and it also encourages the reviewers to disclose conflicts of interest. Another argument that undermines the benefits of double-blind reviewing is that the authorship of the manuscript is often obvious to a knowledgeable reader from the context, \textit{e.g.}, self-referencing, research topic, writing style, working paper repositories, seminars, \textit{etc}. \citep{Falagas:2006, Justice:1998, Yankauer:1991}. Furthermore, this mechanism does not prevent against certain types of bias, \textit{e.g.}, when a reviewer rejects new evidence or new knowledge because it contradicts established norms, beliefs or paradigms.

Some work has focused on the calibration aspect of peer review. \cite{Roos:2011} proposed a maximum likelihood method for calibrating reviews by estimating both the bias of each reviewer and the unknown ideal score of the manuscript. Bias is treated as the general rigor of a reviewer across all his reviews. Hence, Roos \textit{et al.}'s method does not attempt to prevent bias by rewarding honest reporting. Instead, it adjusts reviews \emph{a posteriori} so that they can be globally comparable.

Instead of calibrating reviews \emph{a posteriori}, \cite{robinson:2001} suggested  to ``calibrate" reviewers \textit{a priori}. Reviewers are first asked to review short texts that have gold-standard reviews, \textit{i.e.}, reviews of high quality provided by experts with relevant expertise. Thereafter, they receive calibration scores, which are later used as weighting factors to determine how well their future reviews will be considered. This approach, however, does not guarantee that reviewers will report honestly after the calibration phase, when gold-standard reviews are no longer available.

To the best of our knowledge, our peer-review experiments are the first to investigate the use of incentives for honest reporting in a peer-review task. When objective verification is not possible, as in the peer-review process, economic measures may be used to encourage experts to honestly disclose their opinions. The proposed scoring method does so by making pairwise comparisons between reported reviews and rewarding agreements.

Rewarding experts based on pairwise comparisons has been empirically proven to be an effective incentive technique in other domains. \cite{Shaw:2011} measured the effectiveness of a collection of social and financial incentive schemes for motivating experts to conduct a qualitative content analysis task. The authors found that treatment conditions that provided financial incentives and asked experts to prospectively think about the responses of their peers  produced more accurate responses. \cite{Huang:2013} showed that informing the experts that their rewards will be based on how similar their responses are to other experts' responses produces more accurate responses than telling the experts that their rewards will be based on how similar their responses are to gold-standard responses. Our work adds to the existing body of literature by theoretically and empirically showing that pairwise comparisons make the peer-review process more accurate.

\section{The Basic Model}
\label{sec:model}

In our proposed peer-review process, a \emph{manuscript} is reviewed by a set of \emph{reviewers} $N = \{1, \dots, n\}$, with $n \geq 2$. The quality of the manuscript is represented by a multinomial distribution\footnote{We use the term \emph{multinomial distribution}  to refer to the generalization of the Bernoulli distribution for discrete random variables with any constant number of outcomes. The parameter of this distribution is a probability vector that specifies the probability of each possible outcome.} $\Omega$ with \emph{unknown} parameter $\boldsymbol\omega = (\omega_{0}, \dots, \omega_{v})$, where $v \in \mathbb{N}^+$ represents the best \emph{evaluation score} that the manuscript can receive and $\omega_{k}$ is the probability assigned to the evaluation score being equal to $k$.

Each reviewer is modeled  as possessing a privately observed draw (signal) from $\Omega$. Hence, our model captures the uncertainty of the reviewers regarding the quality of the manuscript. We extend the model to multiple observed signals in Section 5. We denote the \emph{honest review} of each reviewer $i \in N$ by $t_i \sim \Omega$, where $t_i \in \{0, \dots, v\}$. Honest reviews are independent and identically distributed, \textit{i.e.}, $P(t_i | t_j) = P(t_i)$. We say that reviewer $i$ is reporting honestly when his \emph{reported review} $r_i$ is equal to his honest review, \textit{i.e.}, $r_i = t_i$.

Reviews are elicited and aggregated by a trusted entity referred  to as the \emph{center}\footnote{We refer to a single reviewer as ``he" and to the center as ``she".}, which is also responsible for rewarding the reviewers. Let $s_i$ be reviewer $i$'s  \emph{review score} after he reports $r_i$. We discuss how $s_i$ is determined in Section 4. Review scores are somehow coupled with relevant incentives, be they social-psychological, such as praise or visibility, or material rewards through prizes or money. We make four major assumptions in our model:

\begin{enumerate}

 \item \emph{Autonomy}: Reviewers cannot influence other reviewers' reviews, \textit{i.e.}, they do not know each other's identity and they are not allowed to communicate to each other during the reviewing process.

 \item \emph{Risk Neutrality}: Reviewers behave so as to maximize their expected review scores.
 
 \item \emph{Dirichlet Priors}: There exists a common prior distribution over $\boldsymbol\omega$, \textit{i.e.}, $P\left(\boldsymbol\omega\right)$. We assume that this prior is a Dirichlet distribution and this is common knowledge.

 \item \emph{Rationality}: After observing $t_i$, every reviewer $i \in N$ updates his belief by applying Bayes' rule to the common prior, \textit{i.e.}, $P\left( \mathbf{\boldsymbol\omega} | {t}_i \right)$.
 
\end{enumerate}

The first assumption describes how peer review is traditionally done in practice. The second assumption means that reviewers are self-interested and no external incentives exist for each reviewer. The third assumption means that reviewers have common prior knowledge about the quality of the manuscript, a natural assumption in the peer-review process. We discuss the formal meaning  of such an assumption in the following subsection. The fourth assumption implies that the posterior distributions are consistent with Bayesian updating, \textit{i.e.}:

\begin{equation*}
P\left(\boldsymbol\omega | t_{i}\right) = \frac{P(t_{i} | \boldsymbol\omega)P(\boldsymbol\omega)}{P(t_{i})} 
\end{equation*}

The last three assumptions imply that reviewers are \emph{Bayesian decision-makers}. We note that different modeling choices could have been used, \textit{e.g.}, models based on games of incomplete information. Unlike our model, an incomplete-information game is often used when experts do not know each other's beliefs. To find strategic equilibria in such incomplete-information models, one would need information about experts' beliefs about each other's private information. A Bayesian structure could be used to model each expert's beliefs about the others, and it would permit the calculation of experts' expected scores, which are maximized at equilibrium. However, the natural autonomy assumption makes such a Bayesian structure unrealistic.

\subsection{Dirichlet Distributions}
\label{subsec:dirichlet}

An important assumption in our model is that reviewers have  \emph{Dirichlet priors} over distributions of evaluation scores. The \emph{Dirichlet distribution} can be seen as a continuous distribution over parameter vectors of a multinomial distribution. Since $\boldsymbol\omega$ is the unknown parameter of the multinomial distribution that describes the quality of the manuscript, then it is natural to consider a Dirichlet distribution as a prior for $\boldsymbol\omega$. Given a vector of positive integers, $\boldsymbol\alpha = (\alpha_0, \dots , \alpha_v)$, that determines the shape of the Dirichlet distribution, the probability density function of the Dirichlet distribution over $\boldsymbol\omega$ is:

\begin{equation}
\label{eq:dirichlet_dist}
P(\boldsymbol\omega  |  \boldsymbol\alpha) = 
\frac{1}{\beta(\boldsymbol\alpha)}\prod_{k=0}^v \omega_{k}^{\alpha_k - 1}
\end{equation}

\noindent where:

\begin{equation*}
\beta(\boldsymbol\alpha) = \frac{\prod_{k=0}^v (\alpha_k -1)!}
                 {\left(\sum_{k=0}^v \alpha_k - 1\right)!}
\end{equation*}

Figure~\ref{fig:Dirichlet} shows the above probability density when $v=2$ for some parameter vectors $\boldsymbol\alpha$. For the Dirichlet distribution in (\ref{eq:dirichlet_dist}), the expected value of $\omega_{j}$ is $\mathbb{E}[\omega_{j} |  \boldsymbol\alpha] = \alpha_i/\sum_{k=0}^v \alpha_k$. The probability vector $\mathbb{E}[\boldsymbol\omega |  \boldsymbol\alpha]=(\mathbb{E}[\omega_{0} |  \boldsymbol\alpha], \dots, \mathbb{E}[\omega_{v} | \boldsymbol\alpha])$ is called the \emph{expected distribution} regarding $\boldsymbol\omega$.

An interesting property of the Dirichlet distribution is that it is the \emph{conjugate prior} of the multinomial distribution \citep{Bernardo:1994}, \textit{i.e.},  the posterior distribution $P(\boldsymbol\omega | \boldsymbol\alpha, t_i)$ is itself a Dirichlet distribution. This relationship is often used in Bayesian statistics to estimate  hidden parameters of multinomial distributions. To illustrate this point, suppose that reviewer $i$ observes the signal $t_i = x$, for $x \in \{0, \dots, v\}$. After applying Bayes' rule, reviewer $i$'s posterior distribution is $P(\boldsymbol\omega | \boldsymbol\alpha, t_i = x) = P(\boldsymbol\omega | (\alpha_0, \alpha_1, \dots, \alpha_x + 1, \dots \alpha_v))$. Consequently, the new expected distribution is:

\begin{equation}
\label{eq:pos_pred_dist}
\mathbb{E}[\boldsymbol\omega |  \boldsymbol\alpha, t_i=x] = \left(
\frac{\alpha_0}{1 + \sum_{k=0}^v \alpha_k},
\frac{\alpha_1}{1 + \sum_{k=0}^v \alpha_k}, 
\dots,
\frac{\alpha_x + 1}{1 + \sum_{k=0}^v \alpha_k}, 
\dots,
\frac{\alpha_v}{1 + \sum_{k=0}^v \alpha_k}\right) 
\end{equation}

We call the probability vector in (\ref{eq:pos_pred_dist}) reviewer $i$'s \emph{posterior predictive distribution} regarding $\boldsymbol{\omega}$ because it provides the distribution of future outcomes given the observed data $t_i$. With this perspective, we regard the values $\alpha_0, \dots , \alpha_v$ as ``pseudo-counts" from ``pseudo-data", where each $\alpha_k$ can be interpreted as the number of times that the $\omega_{k}$-probability event has been observed before.

\begin{figure}
\begin{center}
\includegraphics[width=\textwidth]{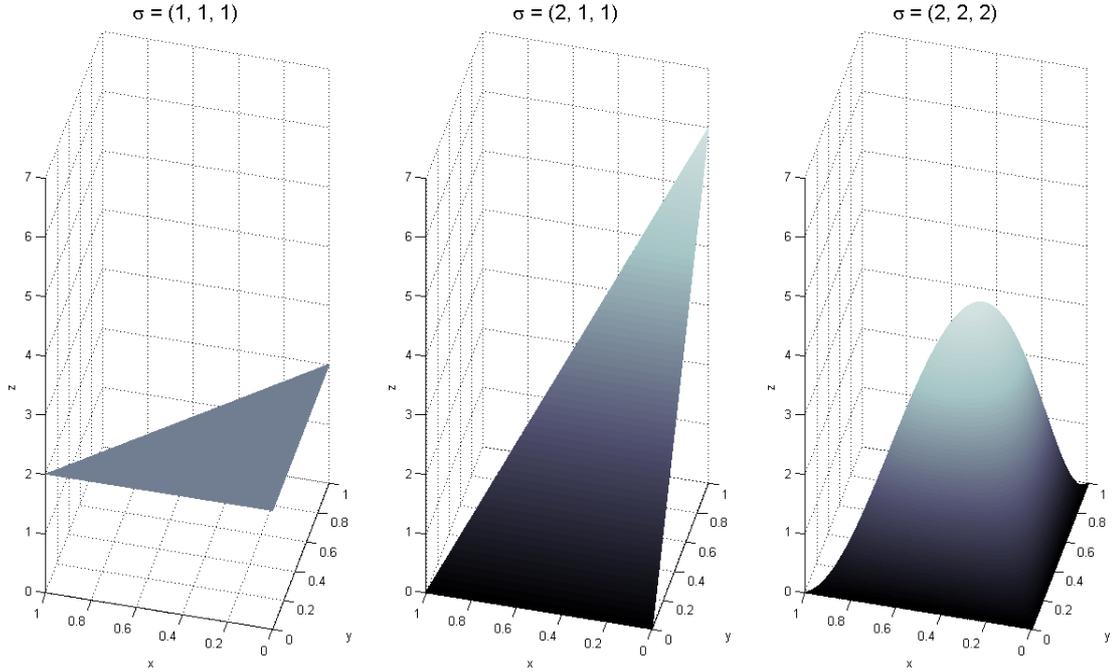}
\caption{Probability densities of Dirichlet distributions when $v=2$ for different parameter vectors. Left: $\boldsymbol\alpha = (1, 1, 1)$. Center: $\boldsymbol\alpha = (2, 1, 1)$. Right: $\boldsymbol\alpha = (2, 2, 2)$. }
\label{fig:Dirichlet}
\end{center}
\end{figure}

Throughout this paper, we assume that reviewers have common prior Dirichlet distributions and this fact is common knowledge, \textit{i.e.}, the value of $\boldsymbol{\alpha}$ is initially the same for all reviewers. A practical interpretation of this assumption is that reviewers have common prior knowledge about the quality of the manuscript, \textit{i.e.}, reviewers have a common expectation regarding the quality of arriving manuscripts.

By using Dirichlet distributions as priors, belief updating can be expressed as an updating of the parameters of the prior distribution\footnote{We note that other priors could have been used. However, the inference process would not necessarily be analytically tractable. In general, tractability can be obtained through conjugate distributions. Hence, another modeling choice is to consider that  evaluation scores follow a normal distribution with unknown parameters. Assuming exchangeability, we can then use either the normal-gamma distribution or the normal-scaled inverse gamma distribution as the conjugate prior~\citep{Bernardo:1994}. The major drawback with this approach is that continuous evaluation scores might bring extra complexity to the reviewing process.}. Furthermore, the assumption of common knowledge allows the center to estimate reviewers' posterior distributions based solely on their reported reviews, a point which is explored by our proposed scoring method. Due to its attractive theoretical properties, the Dirichlet distribution has been used to model uncertainty in a variety of different scenarios, \textit{e.g}, when experts are sharing a reward based on peer evaluations \citep{Carvalho:2012} and when experts are grouped based on their  individual differences \citep{Navarro:2006}.

\section{Scoring Method}

In this section, we propose a scoring method to induce honest reporting of reviews. The proposed method is built on proper scoring rules \citep{Winkler:1968}.

\subsection{Proper Scoring Rules}
\label{sec:scoring_rules}

Consider an uncertain quantity with possible outcomes $o_0, \dots, o_v$, and a probability vector $\mathbf{z} = (z_0, \dots, z_v)$, where $z_k$ is the probability value associated with the occurrence of outcome $o_k$. A \emph{scoring rule} $R(\mathbf{z}, e)$ is a function that provides a score for the assessment $\mathbf{z}$ upon observing the outcome $o_e$, for $e \in \{0, \dots, v\}$. A scoring rule is called  \emph{strictly proper} when an expert receives his maximum expected score if and only if his stated assessment $\mathbf{z}$ corresponds to his true assessment  $\mathbf{q} = (q_0,\dots, q_v)$~\citep{Winkler:1968}. The \emph{expected score} of $\mathbf{z}$ at $\mathbf{q}$ for a real-valued scoring rule $R(\mathbf{z}, e)$ is:

\begin{equation*}
\mathbb{E}_{\mathbf{q}} \left[ R(\mathbf{z}, e) \right] = \sum_{e=0}^v q_e\, R(\mathbf{z},e)
\end{equation*}

Proper scoring rules have been used as a tool to promote honest reporting in a variety of domains, \textit{e.g.}, when sharing rewards amongst a set of experts based on peer evaluations \citep{Carvalho:2010,Carvalho:2012}, to incentivize experts to accurately estimate their own efforts to accomplish a task \citep{Bacon:2012}, in prediction markets \citep{Hanson:2003}, in weather forecasting \citep{Gneiting:2007}, \textit{etc}. Some of the best known strictly proper scoring rules, together with their scoring ranges, are:

\begin{align}
\mbox{logarithmic: }& R(\textbf{z},e) = \log z_e \hspace{0.8in} (-\infty,0] \nonumber\\
\mbox{quadratic: }  & R(\textbf{z},e) = 2z_e - \sum_{k=0}^v z_{k}^2 \hspace{0.46in}  [-1,1]\label{eq:quad_scor_rul}\\
\mbox{spherical: }  & R(\textbf{z},e) = \frac{z_e}{\sqrt{\sum_{k=0}^v z_{k}^2}} \hspace{0.63in} [0,1] \nonumber
\end{align}

All the above scoring rules are \emph{symmetric}, \textit{i.e.}, $R((z_0, \dots, z_v), e) =  R((z_{\pi_0}, \dots, z_{\pi_v}), \pi_e)$, for all probability vectors $\mathbf{z} = (z_0, \dots, z_v)$, for all permutations $\pi$ on $v+1$ elements, and for all outcomes indexed by $e \in \{0, \dots, v\}$.  We say that a scoring rule is \emph{bounded} if $R(\mathbf{z}, e) \in \mathbb{R}$, for all probability vectors $\mathbf{z}$ and $e \in \{0, \dots, v\}$. For example, the logarithmic scoring rule is not bounded  because it might return $-\infty$ whenever the probability vector $\mathbf{z}$ contains an element equal to zero, whereas both the quadratic and spherical scoring rules are always bounded. A well-known property of strictly proper scoring rules is that they are still strictly proper under positive affine transformations \citep{Gneiting:2007}, \textit{i.e.}, $\mbox{argmax}_{\mathbf{z}} \mathbb{E}_{\mathbf{q}} \left[ \gamma R(\mathbf{z}, e) + \lambda\right] = \mbox{argmax}_{\mathbf{z}} \mathbb{E}_{\mathbf{q}} \left[ R(\mathbf{z}, e)\right] = \mathbf{q}$, for a strictly proper scoring rule $R$, $\gamma > 0$, and $\lambda \in \mathbb{R}$.

\begin{proposition} 
If $R(\mathbf{z}, e)$ is a strictly proper scoring rule, then a positive affine transformation of $R$, \textit{i.e.}, $\gamma R(\mathbf{z}, e) + \lambda,$ for $\gamma > 0$ and $\lambda \in \mathbb{R}$, is also strictly proper.
\end{proposition}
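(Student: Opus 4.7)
The plan is to argue by direct computation, exploiting linearity of expectation together with the invariance of $\arg\max$ under positive affine transformations. Since strict propriety is defined in terms of which probability vector uniquely maximizes the expected score, it suffices to show that the expected scores of $R$ and of $\gamma R + \lambda$ (viewed as functions of the reported vector $\mathbf{z}$) have the same unique maximizer under any true assessment $\mathbf{q}$.

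First, I would fix an arbitrary true assessment $\mathbf{q} = (q_0, \dots, q_v)$ and compute the expected score of the transformed rule at a reported vector $\mathbf{z}$:
\begin{equation*}
\mathbb{E}_{\mathbf{q}}\bigl[\gamma R(\mathbf{z}, e) + \lambda\bigr] = \gamma \sum_{e=0}^{v} q_e\, R(\mathbf{z}, e) + \lambda \sum_{e=0}^{v} q_e = \gamma\, \mathbb{E}_{\mathbf{q}}\bigl[R(\mathbf{z}, e)\bigr] + \lambda,
\end{equation*}
where the last equality uses $\sum_{e=0}^v q_e = 1$. Next, since $\gamma > 0$ and $\lambda$ does not depend on $\mathbf{z}$, the map $x \mapsto \gamma x + \lambda$ is strictly increasing in $x$, so
\begin{equation*}
\argmax_{\mathbf{z}} \mathbb{E}_{\mathbf{q}}\bigl[\gamma R(\mathbf{z}, e) + \lambda\bigr] = \argmax_{\mathbf{z}} \mathbb{E}_{\mathbf{q}}\bigl[R(\mathbf{z}, e)\bigr].
\end{equation*}
By the strict propriety of $R$, the right-hand argmax is uniquely attained at $\mathbf{z} = \mathbf{q}$, and hence so is the left-hand one.

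Since $\mathbf{q}$ was arbitrary, the transformed rule admits $\mathbf{q}$ as the unique expected-score maximizer whenever $\mathbf{q}$ is the expert's true assessment, which is exactly the definition of strict propriety. I do not anticipate a genuine obstacle here: the argument is one line of linearity and one line of strict monotonicity of the affine map, and the only points that need to be stated carefully are that $\gamma$ is strictly positive (so monotonicity preserves the unique maximizer rather than merely the set of maximizers) and that the additive constant drops out because $\mathbf{q}$ is a probability vector.
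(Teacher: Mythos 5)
Your argument is correct and matches the paper's treatment: the paper simply asserts the identity $\argmax_{\mathbf{z}} \mathbb{E}_{\mathbf{q}}[\gamma R(\mathbf{z},e)+\lambda] = \argmax_{\mathbf{z}} \mathbb{E}_{\mathbf{q}}[R(\mathbf{z},e)] = \mathbf{q}$ with a citation, and your two steps (linearity of expectation with $\sum_e q_e = 1$, then strict monotonicity of $x \mapsto \gamma x + \lambda$) are exactly the justification behind that identity. Nothing is missing.
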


\subsection{Review Scores}
\label{sec:review_scores}

If we knew \textit{a priori} reviewers' honest reviews, we could then compare the honest reviews to the reported reviews and reward agreement. However, due to the subjective nature of the peer-review process, we are facing a situation where this objective truth is practically unknowable. Our solution is to induce honest reporting through pairwise comparisons of reported reviews. The first step towards computing each reviewer $i$'s review score is to estimate  his posterior predictive distribution $\mathbb{E}[\boldsymbol\omega |  \boldsymbol\alpha, t_i]$ shown in (\ref{eq:pos_pred_dist}) based on his reported review $r_{i}$. Let $\mathbb{E}[\boldsymbol\omega |  \boldsymbol\alpha, r_i] = (\mathbb{E}[\omega_0 |  \boldsymbol\alpha, r_i], \dots,\mathbb{E}[\omega_v |  \boldsymbol\alpha, r_i])$  be such an estimation, where:

\begin{equation}
\label{eq:est_post_pred_dist}
\mathbb{E}[\omega_k |  \boldsymbol\alpha, r_i] = \left\{ \begin{array}{ll}
\frac{\alpha_k +1}{1 + \sum_{x=0}^v \alpha_x} & \textrm{if $r_{i} = k$},\\
\frac{\alpha_k}{1+ \sum_{x=0}^v \alpha_x} & \textrm{otherwise}.\\
\end{array} \right.
\end{equation}

Recall that the elements of reviewer $i$'s true posterior predictive distribution are defined as:

\begin{equation*}
\mathbb{E}\left[\omega_{k} |  \boldsymbol{\alpha}, t_{i}\right] =  \left\{ \begin{array}{ll}
\frac{\alpha_k +1}{1 + \sum_{x=0}^v \alpha_x} & \textrm{if $t_{i} = k$},\\
\frac{\alpha_k}{1+ \sum_{x=0}^v \alpha_x} & \textrm{otherwise}.\\
\end{array} \right.
\end{equation*}

Clearly, $\mathbb{E}\left[\omega_{k} |  \boldsymbol{\alpha}, r_{i}\right] =  \mathbb{E}\left[\omega_{k} |  \boldsymbol{\alpha}, t_{i}\right] $  if and only if reviewer $i$ is reporting honestly, \textit{i.e.}, when he reports $r_i = t_i$. The review score of reviewer $i$ is determined as follows:

\begin{equation}
\label{eq:truth_telling_score}
s_i = \sum_{j \neq i} ( \gamma R\left(\mathbb{E}\left[\boldsymbol\omega |  \boldsymbol{\alpha}, r_{i}\right], r_{j} \right) + \lambda)
\end{equation}

\noindent where $\gamma$ and $\lambda$ are constants, for $\gamma > 0$ and $\lambda \in \mathbb{R}$, and $R$ is a strictly proper scoring rule. Scoring rules require an observable outcome, or a ``reality", in order to score an assessment. Intuitively, we consider each review reported by every reviewer other than reviewer $i$ as an observed outcome, \textit{i.e.}, the evaluation score deserved by the manuscript, and then we score reviewer $i$'s \emph{estimated posterior predictive distribution} in (\ref{eq:est_post_pred_dist}) as an assessment of that value.

\begin{proposition}
Each reviewer $i \in N$ strictly maximizes his expected review score if and only if $r_{i} = t_{i}$.
\end{proposition}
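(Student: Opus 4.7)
My plan is to compute reviewer $i$'s expected score conditional on his private signal $t_i$, assuming the remaining reviewers report honestly, and then invoke strict properness of $R$ (via Proposition~1) to conclude that the unique maximizer is $r_i = t_i$. The core observation is that $i$'s subjective distribution over another honest reviewer's report is precisely the estimated posterior predictive that appears as the assessment inside $R$.

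First I would fix reviewer $i$, condition on $t_i$, and use Autonomy together with the i.i.d.\ structure of honest signals to write $r_j = t_j$ for $j \neq i$. Integrating $P(t_j \mid \boldsymbol\omega)$ against the Dirichlet posterior $P(\boldsymbol\omega \mid \boldsymbol\alpha, t_i)$ and invoking the Dirichlet–multinomial conjugacy recalled in Section 3.1, the conditional distribution of $r_j$ given $t_i$ equals reviewer $i$'s posterior predictive distribution $\mathbb{E}[\boldsymbol\omega \mid \boldsymbol\alpha, t_i]$ from equation~(\ref{eq:pos_pred_dist}).

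Next I would rewrite the expected score as
\[
\mathbb{E}[s_i \mid t_i] \;=\; \sum_{j \neq i} \mathbb{E}_{r_j \mid t_i}\bigl[\gamma R(\mathbb{E}[\boldsymbol\omega \mid \boldsymbol\alpha, r_i], r_j) + \lambda\bigr].
\]
Each summand is the expected value of a positive affine transformation of the strictly proper scoring rule $R$, evaluated at the assessment $\mathbf{z} = \mathbb{E}[\boldsymbol\omega \mid \boldsymbol\alpha, r_i]$ against an outcome whose true distribution is $\mathbf{q} = \mathbb{E}[\boldsymbol\omega \mid \boldsymbol\alpha, t_i]$. By Proposition~1, the transformed rule is itself strictly proper, so each summand is uniquely maximized over all probability vectors $\mathbf{z}$ at $\mathbf{z} = \mathbf{q}$. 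Since the same $r_i$ governs every summand, the full sum is strictly maximized exactly when $\mathbb{E}[\boldsymbol\omega \mid \boldsymbol\alpha, r_i] = \mathbb{E}[\boldsymbol\omega \mid \boldsymbol\alpha, t_i]$.

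Finally I would translate this equality of assessment vectors into an equality of reports. Inspecting the piecewise definition in~(\ref{eq:est_post_pred_dist}) componentwise, the map $r_i \mapsto \mathbb{E}[\boldsymbol\omega \mid \boldsymbol\alpha, r_i]$ is injective on $\{0,\dots,v\}$ — distinct reports increment distinct coordinates — so the equality above holds iff $r_i = t_i$, giving the ``if and only if''. The main obstacle I anticipate is the transition from ``optimal over the whole simplex'' to ``optimal over the reviewer's restricted choice set''; the reviewer can only select assessments of the form $\mathbb{E}[\boldsymbol\omega \mid \boldsymbol\alpha, k]$, but because $k = t_i$ lands exactly on the true posterior predictive $\mathbf{q}$, strict properness (uniqueness on the simplex) still forces any deviating report $r_i \neq t_i$ to yield a strictly smaller expected score, which is what the proposition requires.
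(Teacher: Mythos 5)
Your proof is correct and follows essentially the same route as the paper's: reduce to a single pairwise term via Autonomy, invoke Proposition~1 to identify the unique maximizing assessment as the true posterior predictive, and observe that the estimated posterior predictive in (\ref{eq:est_post_pred_dist}) coincides with it iff $r_i = t_i$. You are in fact somewhat more explicit than the paper, which leaves implicit both that reviewer $i$'s subjective distribution over $r_j$ equals $\mathbb{E}[\boldsymbol\omega \mid \boldsymbol\alpha, t_i]$ (this requires $j$ to report honestly, so the claim is really a best-response statement) and the step from optimality over the whole simplex to optimality over the discrete set of reachable assessments, which your injectivity remark handles.
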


\begin{proof}
Let $\mathbf{\Theta}_i = \mathbb{E}\left[\boldsymbol\omega |  \boldsymbol{\alpha}, t_{i}\right]$ and $\mathbf{\Phi}_i = \mathbb{E}\left[\boldsymbol\omega |  \boldsymbol{\alpha}, r_{i}\right]$. By the autonomy assumption, reviewers cannot affect their peers' reviews. Hence, we can restrict ourselves to show that each reviewer $i \in N$ strictly maximizes $\mathbb{E}_{\mathbf{\Theta}_i} \left[\gamma R\left(\mathbf{\Phi}_i, r_j\right)+\lambda\right]$, for $j\neq i$, if and only if $r_i = t_i$.

\textbf{(If part)} Since $R$ is a strictly proper scoring rule, from Proposition 1 we have that:

\begin{equation*}
\argmax_{\mathbf{\Phi}_i} \mathbb{E}_{\mathbf{\Theta}_i} \left[\gamma R\left(\mathbf{\Phi}_i, r_j \right) + \lambda\right] = \mathbf{\Theta}_i
\end{equation*}

If $r_{i} = t_{i}$, then by construction $\mathbf{\Phi}_i = \mathbf{\Theta}_i$, \textit{i.e.}, the estimated posterior predictive distribution in (\ref{eq:est_post_pred_dist}) is equal to the true posterior predictive distribution in (\ref{eq:pos_pred_dist}). Consequently, honest reporting strictly maximizes reviewers' expected review scores.

\textbf{(Only-if part)}. Using a similar argument, given that $R$ is a strictly proper scoring rule, from Proposition 1 we have that:

\begin{equation*}
\argmax_{\mathbf{\Phi}_i} \mathbb{E}_{\mathbf{\Theta}_i} \left[\gamma R\left(\mathbf{\Phi}_i, r_j \right) + \lambda\right] = \mathbf{\Theta}_i
\end{equation*}  

By construction, $\mathbf{\Phi}_i = \mathbf{\Theta}_i $ if and only if $r_i = t_i$ (see equations (\ref{eq:pos_pred_dist}) and (\ref{eq:est_post_pred_dist})). Thus, reviewers' expected review scores are strictly maximized only when reviewers are honest. 
\end{proof}

Another way to interpret the above result is to imagine that each reviewer is betting on the review deserved by the manuscript. Since the most relevant information available to him is the observed signal, then the strategy that maximizes his expected review score is to bet on that signal, \textit{i.e.}, to bet on his honest review. When this happens, the true posterior predictive distribution in (\ref{eq:pos_pred_dist}) is equal to the estimated posterior predictive distribution in (\ref{eq:est_post_pred_dist}) and, consequently, the expected score resulting from a strictly proper scoring rule is strictly maximized when the expectation is taken with respect to the true posterior predictive distribution.

It is important to observe that by incentivizing honest reporting, the scoring function in (\ref{eq:truth_telling_score}) also incentivizes accuracy since honest reviews are draws from the distribution that represents the true quality of the manuscript. In other words, the center is indirectly observing these draws when reviewers report honestly. Consequently, due to the law of large numbers, the distribution of the reported reviews converges to the distribution that represents the true quality of the manuscript as the number of honestly reported reviews increases. Our experimental results in Section 7 show that there indeed exists a strong correlation between honesty and accuracy.

Different interpretations of the scoring method in (\ref{eq:truth_telling_score}) arises depending on the underlying strictly proper scoring rule and the hyperparameter $\boldsymbol{\alpha}$. In the following subsections, we discuss two different interpretations: 1) when $R$ is a symmetric and bounded strictly proper scoring rule and reviewers' prior distributions are non-informative; and 2) when $R$ is a strictly proper scoring rule sensitive to distance.

\subsection{Rewarding Agreement}
\label{sec:agreement}

Assume that reviewers' prior distributions are non-informative, \textit{i.e.}, all the elements making up the hyperparameter $\boldsymbol{\alpha}$ have the same value. This happens when reviewers have no relevant prior knowledge about the quality of the manuscript. Consequently, the elements of reviewers' true and estimated posterior predictive distributions can take on only two possible values (see equations (\ref{eq:pos_pred_dist}) and (\ref{eq:est_post_pred_dist}) for $\alpha_0 = \alpha_1 = \dots = \alpha_v$ ).

Moreover, if $R$ is a symmetric scoring rule, then the term $ R\left(\mathbb{E}\left[\boldsymbol\omega |  \boldsymbol{\alpha}, r_{i}\right], r_{j} \right)$  in (\ref{eq:truth_telling_score}) can take on only two possible values because a permutation of elements with similar values does not change the score of a symmetric scoring rule. When $R$ is also strictly proper, it means that $ R\left(\mathbb{E}\left[\boldsymbol\omega |  \boldsymbol{\alpha}, r_{i}\right], r_{j} \right) = \delta_{max}$, when $r_i = r_j$, and $ R\left(\mathbb{E}\left[\boldsymbol\omega |  \boldsymbol{\alpha}, r_{i}\right], r_{j} \right) = \delta_{min}$, when $r_i \neq r_j$, where $\delta_{max} > \delta_{min}$. Consequently, each term of the summation in (\ref{eq:truth_telling_score}) can be written as:

\begin{equation*}
\gamma R\left(\mathbb{E}\left[\boldsymbol\omega |  \boldsymbol{\alpha}, r_{i}\right], r_{j} \right) + \lambda = \left\{ \begin{array}{ll}
\gamma \delta_{max} + \lambda   & \textrm{if $r_{i} = r_{j}$},\\
\gamma \delta_{min} + \lambda & \textrm{otherwise}.\\
\end{array} \right.
\end{equation*}

When $R$ is also bounded, we can then set $\gamma = \frac{1}{\delta_{max} - \delta_{min}}$ and $\lambda = \frac{-\delta_{min}}{\delta_{max} - \delta_{min}}$, and the above values become, respectively, $1$ and $0$. Hence, the resulting review scores do not depend on parameters of the model. Moreover, we obtain an  intuitive interpretation of the scoring method in (\ref{eq:truth_telling_score}): whenever two reported reviews are equal to each other, the underlying reviewers are rewarded by one payoff unit. Thus, in practice, our scoring method works by simply comparing reported reviews and rewarding agreements whenever $R$ is a symmetric and bounded strictly proper scoring rule and reviewers have no informative prior knowledge about the quality of the manuscript.

Another interesting point is that the center can reward different agreements in different ways, \textit{i.e.}, reviewers are not necessarily equally valued. For example, if the center knows \emph{a priori} that a particular reviewer $j$ is reliable (respectively, unreliable), then she can increase (respectively, decrease) the reward of reviewers whose reviews are in agreement with reviewer $j$'s reported review. Formally, this means that for different reviewers $i$ and $j$, the center can use different values for $\gamma$ and $\lambda$  in  (\ref{eq:truth_telling_score}). Proposition 2 is not affected by this as long as $\gamma > 0$, $\lambda \in \mathbb{R}$, and their values are independent of the reported reviews. Hence, by having a few reliable reviewers, this approach might help to eliminate the hypothetical scenario where a set of reviewers learn over time to report similar reviews. A similar idea was proposed by \cite{Jurca:2009} to prevent collusions in reputation systems.

\subsection{Strictly Proper Scoring Rules Sensitive to Distance}
\label{subsec:spsr_distance}

Pairwise comparisons, as defined in the previous subsection, might work well for small values of $v$, the best evaluation score that the manuscript can receive, but it can be too restrictive and, to some degree, unfair when the best evaluation score is high. For example, when $v = 10$ and the review used as the observed outcome is also equal to $10$, a reported review equal to $9$ seems to be more accurate than a reported review equal to $1$. One effective way to deal with these issues is by using strictly proper scoring rules in (\ref{eq:truth_telling_score}) that are \emph{sensitive to distance}.

Using the notation of Section 4.1, recall that $\mathbf{z} = (z_0,\dots, z_v)$ is some reported probability distribution. Given that the outcomes are ordered, we denote the cumulative probabilities by capital letter: ${Z}_k = \sum_{j\leq k} z_j$.  We first define the notion of distance between two probability vectors as proposed by  \cite{Holstein:1970}. We say that a probability vector $\mathbf{z}^\prime$ is more distant from the $j$th outcome than a probability vector $\mathbf{z} \neq \mathbf{z}^\prime$ if:

\begin{align}
{Z}_k^\prime \geq {Z}_k, &\mbox{  for } k = 0, \dots, j-1\nonumber\\
{Z}_k^\prime \leq {Z}_k, &\mbox{  for } k = j, \dots, v \nonumber
\end{align}

Intuitively, the above definition means that $\mathbf{z}$ can be obtained from $\mathbf{z}^\prime$ by successively moving probability mass towards the $j$th outcome from other outcomes \citep{Holstein:1970}. A scoring rule $R$  is said to be \emph{sensitive to distance} if  $R(\mathbf{z}, j) > R(\mathbf{z}^\prime, j)$ whenever $\mathbf{z}^\prime$ is more distant from $\mathbf{z}$ for all $j$. \cite{Epstein:1969} introduced the \emph{ranked probability score} (RPS), a strictly proper scoring rule that is sensitive to distance. Using the formulation of Epstein's result proposed by \cite{Murphy:1971}, we have for a probability vector $\mathbf{z}$ and an observed outcome $j \in \{0, \dots, v\}$:

\begin{equation}
\label{eq:RPS}
RPS(\mathbf{z}, j) = - \sum_{k=0}^{j-1} {Z}_k^2 - \sum_{k=j}^{v} (1-{Z}_k)^2
\end{equation}

Figure~\ref{fig:sco_rul_dist} illustrates the scores returned by (\ref{eq:RPS}) for different reported reviews and values for $j$ when reviewers' prior distributions are non-informative. When using RPS as the strictly proper scoring rule in (\ref{eq:truth_telling_score}), reviewers are rewarded based on how close their reported reviews are to the  reviews taken as observed outcomes. For example, when the review used as the observed outcome is equal to $0$ (see the dotted line with squares in Figure~\ref{fig:sco_rul_dist}), the returned score monotonically decreases as the reported review increases. Since RPS is strictly proper, Proposition 2 is still valid for any hyperparameter $\boldsymbol\alpha$, \textit{i.e.}, each reviewer strictly maximizes his expected review score by reporting honestly. The scoring range of RPS is $[-v, 0]$. Hence, review scores are always non-negative when using $\gamma = 1$ and $\lambda = v$ in (\ref{eq:truth_telling_score}).

\begin{figure}
\centering
\includegraphics[scale=0.4]{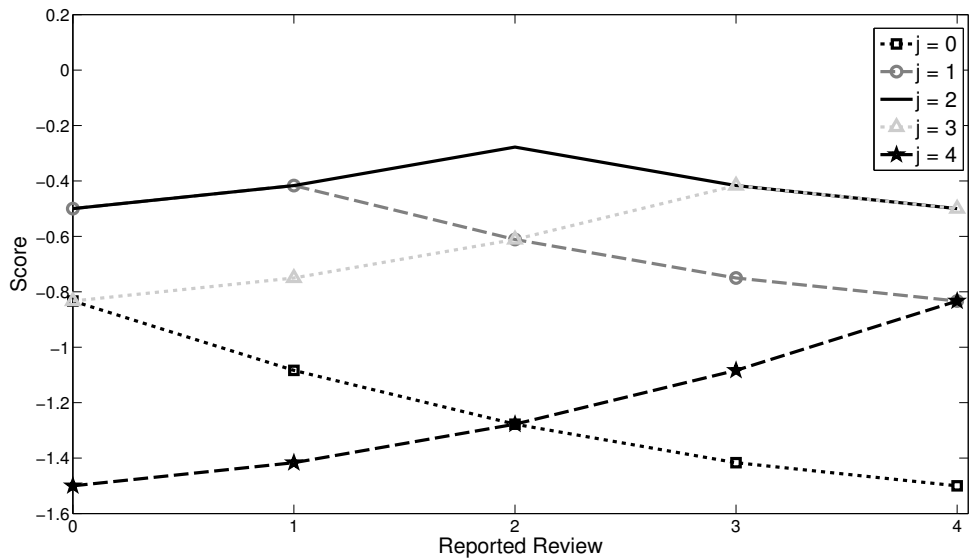}
\caption{Scores returned by $R\left(\mathbb{E}\left[\boldsymbol\omega |  \boldsymbol{\alpha}, r_{i}\right], j \right)$ for different reported reviews when $v = 4$, $R$ is the RPS rule, and $\boldsymbol\alpha=(1,1,1,1,1)$. Each line represents a different value for $j$ (observed outcome).}
\label{fig:sco_rul_dist}
\end{figure}

\subsection{Numerical Example}
\label{subsec:num_example}

Consider four reviewers ($n = 4$) and the best evaluation score being equal to four ($v = 4$). Suppose that reviewers have non-informative Dirichlet priors with $\boldsymbol\alpha = (1,1,1,1,1)$, and that reviewers 1, 2, 3, and 4 report, respectively, ${r}_{1} = 0$, ${r}_{2} = 0$, $r_{3} = 1$, and $r_{4} = 4$. From (\ref{eq:est_post_pred_dist}), the resulting estimated posterior predictive distributions are, respectively, $\mathbb{E}[\boldsymbol\omega |  \boldsymbol\alpha, r_1=0] = \left(\frac{2}{6}, \frac{1}{6}, \frac{1}{6}, \frac{1}{6}, \frac{1}{6} \right)$, $\mathbb{E}[\boldsymbol\omega|  \boldsymbol\alpha, r_2=0] = \left(\frac{2}{6}, \frac{1}{6}, \frac{1}{6}, \frac{1}{6}, \frac{1}{6} \right)$, $\mathbb{E}[\boldsymbol\omega |  \boldsymbol\alpha, r_3=1] = \left(\frac{1}{6}, \frac{2}{6}, \frac{1}{6}, \frac{1}{6}, \frac{1}{6} \right)$, and $\mathbb{E}[\boldsymbol\omega |  \boldsymbol\alpha, r_4=4] = \left(\frac{1}{6}, \frac{1}{6}, \frac{1}{6}, \frac{1}{6}, \frac{2}{6} \right)$. In what follows, we illustrate the scores returned by (\ref{eq:truth_telling_score}) when using a symmetric and bounded strictly proper scoring rule and when using RPS.

\subsubsection{Rewarding Agreements}

Assume that $R$ in (\ref{eq:truth_telling_score}) is the quadratic scoring rule shown in (\ref{eq:quad_scor_rul}), which in turn is symmetric, bounded, and strictly proper. Consequently, as discussed in Section 4.3, the term $\gamma R\left(\mathbb{E}[\boldsymbol\omega |  \boldsymbol\alpha, r_i], r_j \right) + \lambda$ in (\ref{eq:truth_telling_score}) can take on only two values:

\begin{equation*}
\begin{array}{ll}
\gamma \left( \frac{4}{v+2} - \left(\frac{2}{v+2}\right)^2 - \sum_{e=0}^{v-1}\left(\frac{1}{v+2}\right)^2\right) + \lambda   & \textrm{if $r_{i} = r_{j}$},\\
\gamma \left( \frac{2}{v+2} - \left(\frac{2}{v+2}\right)^2 - \sum_{e=0}^{v-1}\left(\frac{1}{v+2}\right)^2\right) + \lambda  & \textrm{otherwise}.\\
\end{array} 
\end{equation*}

Hence, by setting $\gamma = \frac{1}{\delta_{max} - \delta_{min}} = \frac{v+2}{2}$ and $\lambda = \frac{-\delta_{min}}{\delta_{max} - \delta_{min}} = \frac{-v}{2v+4}$, the above values are equal to, respectively, $1$ and $0$. Using the scoring method in (\ref{eq:truth_telling_score}), we obtain the following review scores: $s_{1} = s_{2} = 1$  and $s_{3} = s_{4} = 0$. That is, the review scores received by reviewers $1$ and $2$ are similar due to the fact that $r_1 = r_2$. Reviewer $3$ and $4$'s review scores are equal to $0$ because there is no match between their reported reviews and others' reported reviews.

\subsubsection{Taking Distance into Account}

Now, assume that $R$ in (\ref{eq:truth_telling_score}) is the RPS rule shown in (\ref{eq:RPS}). In order to ensure non-negative review scores, let $\gamma = 1$ and $\lambda = v = 4$ . Using the scoring method in (\ref{eq:truth_telling_score}), we obtain the following review scores: $s_1 = s_2 = (-0.8333\gamma + \lambda) + (-0.5\gamma + \lambda) + (-1.5\gamma + \lambda) = 9.1667, s_{3} = 2\cdot(-1.0833\gamma + \lambda) + (-1.4167\gamma + \lambda) = 8.4167$, and $s_{4} = 2\cdot(-1.5\gamma + \lambda) + (-0.8333\gamma + \lambda) = 8.1667$. The review score of reviewer $4$ is the lowest because his reported review is the most different review, \textit{i.e.}, it has the largest distance between it and all of the other reviews.

\section{Multiple Criteria}

In our basic model, reviewers observe only one signal from the distribution that represents the quality of the manuscript. However, manuscripts are often evaluated under multiple criteria, \textit{e.g.}, relevance, clarity, originality, \textit{etc}., meaning that in practice reviewers might observe multiple signals and report multiple evaluation scores. Under the assumption that these signals are independent, each reported evaluation score can be scored individually using the same scoring method proposed in the previous section. Clearly, Proposition 2 is still valid, \textit{i.e.}, honest reporting still strictly maximizes reviewers' expected review scores.

The lack of relationship between different criteria is not always a reasonable assumption. A modeling choice that takes the relationship between observed signals into account, which is also consistent with our basic model, is to assume that the quality of the manuscript is still represented by a  multinomial distribution, but now reviewers may observe several signals from that distribution. Formally, let $\rho \in \mathbb{N}^+$ be the number of draws from the distribution that represents the quality of the manuscript, where each signal represents an evaluation score related to a criterion. Instead of a single number, each reviewer $i$'s private information is now a vector:  $\mathbf{t}_i = \left(t_{i, 1}, \dots, t_{i, \rho}\right)$, where $t_{i,k} \in \{0, \dots, v\}$, for $k \in \{1,\dots, \rho\}$. The basic assumptions (autonomy, risk neutrality, Dirichlet priors, and  rationality) are still the same. For ease of exposition, we denote reviewer $i$'s true posterior predictive distribution in this section by $\mathbf{\Theta}_i^{(\rho)} = \mathbb{E}\left[\boldsymbol\omega |  \boldsymbol{\alpha}, \mathbf{t}_{i}\right]$. Under this new model, each reviewer $i$'s posterior predictive distribution is now defined as:

\begin{equation}
\label{eq:pos_pred_dist_mult_obs}
\mathbf{\Theta}_i^{(\rho)}  = \left(\frac{\alpha_0 + \sum_{k = 1}^\rho H(0, t_{i ,k})}
                                      {\rho+\sum_{x=0}^v \alpha_x},
                          \frac{\alpha_1 + \sum_{k = 1}^\rho H(1, t_{i ,k})}
                                      {\rho+\sum_{x=0}^v \alpha_x}, \dots,
                          \frac{\alpha_v + \sum_{k = 1}^\rho H(v, t_{i, k})}                                                            
                                      {\rho+\sum_{x=0}^v \alpha_x}\right)
\end{equation}

\noindent where $H(x, y)$ is an indicator function:

\begin{equation*}
H(x, y) = \left\{ \begin{array}{ll}
1 & \textrm{if $x = y$,}\\
0 & \textrm{otherwise}.\\
\end{array} \right.
\end{equation*}

Assuming that each reported review is a vector of $\rho$ evaluations scores, \textit{i.e.}, $\mathbf{r}_i = \left(r_{i, 1}, \dots, r_{i, \rho}\right)$, where $r_{i,k} \in \{0, \dots, v\}$, for $i \in N$ and $k \in \{1,\dots, \rho\}$, the center estimates each reviewer $i$'s posterior predictive distribution by applying Bayes' rule to the common prior. The resulting estimated posterior predictive distribution $\mathbb{E}\left[\boldsymbol\omega |  \boldsymbol{\alpha}, \mathbf{r}_{i}\right]$, referred to as $\mathbf{\Phi}_{i}^{(\rho)}$ for ease of exposition, is:

\begin{equation}
\label{eq:est_pos_pred_dist_mult_obs}
\mathbf{\Phi}_{i}^{(\rho)} = \left(\frac{\alpha_0 + \sum_{k = 1}^\rho H(0, r_{i ,k})}
                                      {\rho+\sum_{x=0}^v \alpha_x},
                          \frac{\alpha_1 + \sum_{k = 1}^\rho H(1, r_{i ,k})}
                                      {\rho+\sum_{x=0}^v \alpha_x}, \dots,
                          \frac{\alpha_v + \sum_{k = 1}^\rho H(v, r_{i, k})}                                                            
                                      {\rho+\sum_{x=0}^v \alpha_x}\right)
\end{equation}

Thereafter, the center rewards $\mathbf{\Phi}_{i}^{(\rho)}$ by using a strictly  proper scoring rule $R$ and other reviewers' reported reviews as observed outcomes:

\begin{equation}
\label{eq:truth_telling_score_mult_obs}
s_i = \sum_{j \neq i} \left( \gamma R\left(\mathbf{\Phi}_{i}^{(\rho)}, G(\mathbf{r}_{j}) \right) + \lambda \right)
\end{equation}

\noindent where $G$ is some function used by the center to summarize each reviewer $j$'s reported review in a single number, and whose image is equal to the set $\{0, \dots, v\}$. For example, $G$ can be a function that returns the median or the mode of the reported evaluation scores. Honest reporting, \textit{i.e.}, $\mathbf{r}_i = \mathbf{t}_i$,  maximizes reviewers' expected review scores under this setting.

\begin{proposition}
When observing and reporting multiple signals, each reviewer $i \in N$ maximizes his expected review score when $\mathbf{r}_i = \mathbf{t}_i$.
\end{proposition}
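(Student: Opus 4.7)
The plan is to mirror the two-step argument of Proposition 2, with the $\rho$-signal posterior predictive distributions $\mathbf{\Theta}_i^{(\rho)}$ and $\mathbf{\Phi}_i^{(\rho)}$ playing the roles of $\mathbf{\Theta}_i$ and $\mathbf{\Phi}_i$, and with $G(\mathbf{r}_j)$ replacing $r_j$. As before, the autonomy assumption lets me isolate a single summand of the score function in (\ref{eq:truth_telling_score_mult_obs}), since reviewer $i$ cannot influence $\mathbf{r}_j$, and Proposition 1 guarantees that the positive affine transformation $\gamma R + \lambda$ is again a strictly proper scoring rule, so that
\begin{equation*}
\argmax_{\mathbf{\Phi}_i^{(\rho)}} \mathbb{E}_{\mathbf{\Theta}_i^{(\rho)}}\bigl[\gamma R\bigl(\mathbf{\Phi}_i^{(\rho)}, G(\mathbf{r}_j)\bigr) + \lambda\bigr] = \mathbf{\Theta}_i^{(\rho)}.
\end{equation*}

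The key structural step is then to compare equations (\ref{eq:pos_pred_dist_mult_obs}) and (\ref{eq:est_pos_pred_dist_mult_obs}): they are identical except that $t_{i,k}$ is replaced by $r_{i,k}$. Thus, whenever $\mathbf{r}_i = \mathbf{t}_i$, we have $\sum_{k=1}^\rho H(x, r_{i,k}) = \sum_{k=1}^\rho H(x, t_{i,k})$ for every $x \in \{0,\dots,v\}$, so $\mathbf{\Phi}_i^{(\rho)} = \mathbf{\Theta}_i^{(\rho)}$ and the per-$j$ expected score attains its maximum. Summing over $j \neq i$ then gives the claim for the whole review score $s_i$.

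The main obstacle, and the reason the statement weakens from the ``strictly maximizes if and only if'' of Proposition 2 to the one-directional ``maximizes when,'' is that the assignment $\mathbf{r}_i \mapsto \mathbf{\Phi}_i^{(\rho)}$ is no longer injective: the distribution $\mathbf{\Phi}_i^{(\rho)}$ depends only on the multiset of entries of $\mathbf{r}_i$ through the counts $\sum_k H(x, r_{i,k})$, so any permutation of $\mathbf{t}_i$ yields exactly the same estimated posterior predictive distribution and hence exactly the same expected score. This non-injectivity is intrinsic to the model and cannot be repaired without imposing additional structure on $\mathbf{r}_i$ (e.g., associating each component with a labelled criterion).

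A secondary delicacy I would flag is that the argument implicitly requires, from reviewer $i$'s Bayesian perspective, $G(\mathbf{r}_j)$ under honest peer reporting to be distributed according to $\mathbf{\Theta}_i^{(\rho)}$, so that the strictly-proper argmax property can be invoked against that particular expectation. This is immediate when $G$ selects a single entry in an exchangeable fashion (and more generally follows from exchangeability of the $\rho$ draws under the Dirichlet model), but for specific choices such as the mode or median the distribution of $G(\mathbf{t}_j)$ diverges from $\mathbf{\Theta}_i^{(\rho)}$ and the cleanest route is to treat $\mathbf{\Phi}_i^{(\rho)}$ simply as reviewer $i$'s best estimate of reviewer $j$'s predictive beliefs and lean on properness of $R$ there.
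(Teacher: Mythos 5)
Your proof is correct and follows essentially the same route as the paper's: autonomy reduces the problem to a single summand, Proposition 1 gives the argmax characterization under the affine transformation, and the structural identity of equations (\ref{eq:pos_pred_dist_mult_obs}) and (\ref{eq:est_pos_pred_dist_mult_obs}) yields $\mathbf{\Phi}_i^{(\rho)} = \mathbf{\Theta}_i^{(\rho)}$ under honest reporting. Your two additional remarks are well taken: the non-injectivity observation is exactly the point the paper makes in the paragraph following its proof, and the subtlety about whether $G(\mathbf{r}_j)$ is actually distributed according to $\mathbf{\Theta}_i^{(\rho)}$ is a genuine gap that the paper's own proof also glosses over by simply taking the expectation with respect to $\mathbf{\Theta}_i^{(\rho)}$ without justification.
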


\begin{proof}
Due to the autonomy assumption, we restrict ourselves to show that each reviewer $i \in N$ maximizes $\mathbb{E}_{\mathbf{\Theta}_i^{(\rho)}} \left[\gamma R\left(\mathbf{\Phi}_i^{(\rho)}, G(\mathbf{r}_{j})\right)+\lambda\right]$, for $j\neq i$, when $\mathbf{r}_i = \mathbf{t}_i$. Given that $R$ is a strictly proper scoring rule, from Proposition 1 we have that:

\begin{equation*}
\argmax_{\mathbf{\Phi}_i^{(\rho)}} \mathbb{E}_{\mathbf{\Theta}_i^{(\rho)}} \left[\gamma R\left(\mathbf{\Phi}_i^{(\rho)}, G(\mathbf{r}_{j}) \right) + \lambda\right] = \mathbf{\Theta}_i^{(\rho)}
\end{equation*}

When $\mathbf{r}_i = \mathbf{t}_i$, we have by construction that $\mathbf{\Phi}_i^{(\rho)} = \mathbf{\Theta}_i^{(\rho)} $ (see equations (\ref{eq:pos_pred_dist_mult_obs}) and (\ref{eq:est_pos_pred_dist_mult_obs})). Thus, reviewers' expected review scores are maximized when reviewers report honestly. 
\end{proof}

When observing and reporting multiple evaluation scores, a reviewer can weakly maximize his expected review score by reporting a review different than his true review as long as the estimated posterior predictive distributions are the same. For example, when reviewer $i$ reports $\mathbf{r}_i = (1,2,3)$, the resulting estimated posterior predictive distribution is the same as when he reports $\mathbf{r}_i = (3,1,2)$, and, consequently, reviewer $i$ receives the same review score in both cases. This implies that the scoring method in (\ref{eq:truth_telling_score_mult_obs}) is more suitable to a peer-review process where all criteria are equally weighted since honest reporting weakly maximizes expected review scores.

\subsection{Summarizing Signals when Prior Distributions are Non-Informative}

When reviewers report multiple evaluation scores, the intuitive interpretation of review scores as rewards for agreements that arises when using symmetric and bounded strictly proper scoring rules (see Section 4.3) is lost because the elements of the estimated posterior predictive distribution in (\ref{eq:est_pos_pred_dist_mult_obs}) can take on more than two different values.

A different approach that preserves the aforementioned intuitive interpretation when reviewers' prior distributions are non-informative is to ask the reviewers to summarize their observed signals into a single value before reporting it, instead of the center doing it on their behalf. Hence, each reviewer $i$ is now reporting honestly when $r_i = G(\mathbf{t}_i)$, where $G$ is some function suggested by the center whose image is equal to the set $\{0, \dots, v\}$. This new model can be interpreted as if the reviewers were reviewing the manuscript under several criteria and reporting the manuscript's overall evaluation score by reporting the value $G(\mathbf{t}_i)$.

Since reviewers are reporting only one value, we can use the original scoring method in (\ref{eq:truth_telling_score}) to promote honest reporting. We prove below that for any symmetric and bounded strictly proper scoring rule, honest reporting strictly maximizes reviewers' expected review scores under the scoring method in (\ref{eq:truth_telling_score}) if and only if $G$ is the mode of the observed signals, \textit{i.e.}, when $G(\mathbf{t}_i) = \argmax_{x \in \{0, \dots, v\}}  \sum_{k = 1}^{\rho} H(x, t_{i,k})$. Ties between observed signals are broken randomly.

\begin{proposition}
When observing multiple signals and reporting $r_i = G(\mathbf{t}_i)$, each reviewer $i \in N$ with non-informative prior strictly maximizes his expected review score under the scoring method in (\ref{eq:truth_telling_score}), for a symmetric and bounded strictly proper scoring rule $R$, if and only if $r_i = \argmax_{x \in \{0, \dots, v\}}  \sum_{k = 1}^{\rho} H(x, t_{i,k})$.
\end{proposition}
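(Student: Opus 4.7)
My plan is to first apply the reduction of Section 4.3 to collapse the scoring method to counting agreements, and then to directly maximize the resulting matching probability using the particular form that the posterior predictive (\ref{eq:pos_pred_dist_mult_obs}) takes under a non-informative prior.

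\textbf{Reduction.} Under a non-informative prior ($\alpha_0 = \cdots = \alpha_v = \alpha$) and any symmetric, bounded, strictly proper scoring rule $R$, the argument of Section 4.3 shows that with the choices $\gamma = 1/(\delta_{\max}-\delta_{\min}) > 0$ and $\lambda = -\delta_{\min}/(\delta_{\max}-\delta_{\min})$, each term of (\ref{eq:truth_telling_score}) equals $1$ when $r_i = r_j$ and $0$ otherwise. By Proposition 1 this positive affine transformation preserves the argmax of reviewer $i$'s expected score, so I may work with $s_i = \sum_{j\neq i}\mathbf{1}[r_i = r_j]$ without loss of generality. Mirroring the Bayesian bookkeeping of the proof of Proposition 2, reviewer $i$ assesses the distribution of $r_j$ via his posterior predictive $\mathbf{\Theta}_i^{(\rho)}$ after observing $\mathbf{t}_i$, giving
\begin{equation*}
\mathbb{E}[s_i \mid \mathbf{t}_i, r_i] \;=\; (n-1)\,\mathbf{\Theta}^{(\rho)}_{i, r_i}.
\end{equation*}

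\textbf{Maximization.} Substituting $\alpha_k = \alpha$ into the coordinate-wise form of (\ref{eq:pos_pred_dist_mult_obs}) yields
\begin{equation*}
\mathbf{\Theta}^{(\rho)}_{i, x} \;=\; \frac{\alpha + \sum_{k=1}^\rho H(x, t_{i,k})}{(v+1)\alpha + \rho},
\end{equation*}
whose denominator is independent of $x$. Hence maximizing $\mathbf{\Theta}^{(\rho)}_{i, r_i}$ over $r_i \in \{0,\dots,v\}$ is equivalent to maximizing the empirical count $\sum_{k=1}^\rho H(r_i, t_{i,k})$, and the maximum is attained (uniquely, modulo the stipulated random tie-breaking between equally-frequent outcomes) at $r_i = \argmax_x \sum_{k=1}^\rho H(x, t_{i,k})$. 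Both directions of the ``if and only if'' then fall out immediately: honest reporting $r_i = G(\mathbf{t}_i)$ is a strict maximum of the expected review score precisely when $G(\mathbf{t}_i)$ equals this mode, since for any $\mathbf{t}_i$ on which $G(\mathbf{t}_i)$ differs from the empirical mode, the numerator $\alpha + \sum_k H(G(\mathbf{t}_i), t_{i,k})$ is strictly smaller than the numerator at the mode.

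\textbf{Main obstacle.} The only non-trivial modeling step is the justification for taking reviewer $i$'s subjective distribution over $r_j$ to be his single-signal posterior predictive $\mathbf{\Theta}_i^{(\rho)}$ rather than, say, the induced distribution of $G(\mathbf{t}_j)$ under a symmetric strategy. This exactly mirrors the convention adopted in the proof of Proposition 2, and it is consistent with the fact that the scoring method in (\ref{eq:truth_telling_score}) evaluates $r_j$ as a single observed outcome. Once that convention is fixed, no further technical machinery is required: the remainder is a direct comparison of numerators in the posterior predictive, handled cleanly by the non-informative structure of the prior.
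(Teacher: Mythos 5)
Your proposal is correct and follows essentially the same route as the paper's proof: both exploit the fact that, under a non-informative prior and a symmetric strictly proper scoring rule, the score $R\left(\mathbf{\Phi}_i^{(1)}, r_j\right)$ takes only two values ($\delta_{max}$ on agreement, $\delta_{min}$ otherwise), take the expectation with respect to the multi-signal posterior predictive $\mathbf{\Theta}_i^{(\rho)}$, and conclude that the expected score is an increasing function of $\theta_{i, r_i}$, hence maximized at the empirical mode. The only differences are cosmetic — you normalize the score to a $0/1$ indicator and maximize directly, whereas the paper keeps $\delta_{max}, \delta_{min}$ explicit and runs the only-if direction by contradiction — and the modeling convention you flag (treating $r_j$ as distributed according to $\mathbf{\Theta}_i^{(\rho)}$) is exactly the one the paper adopts.
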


\begin{proof}
Recall that since each reviewer $i \in N$ observes multiple signals, his true posterior predictive distribution is equal to $\mathbb{E}\left[\boldsymbol\omega |  \boldsymbol{\alpha}, \mathbf{t}_{i}\right] = \mathbf{\Theta}_{i}^{(\rho)} = (\theta_0, \theta_1, \dots, \theta_v)$ as shown in (\ref{eq:pos_pred_dist_mult_obs}). Due to Proposition 1 and since reviewers cannot affect their peers' reviews because of the autonomy assumption, we  restrict ourselves to show that each reviewer $i \in N$ maximizes $\mathbb{E}_{\mathbf{\Theta}_i^{(\rho)}} \left[ R\left(\mathbf{\Phi}_i^{(1)}, r_j\right)\right]$, for $j\neq i$, if and only if $r_i = \argmax_{x \in \{0, \dots, v\}}  \sum_{k = 1}^{\rho} H(x, t_{i,k})$, where $\mathbf{\Phi}_i^{(1)} = \mathbb{E}\left[\boldsymbol\omega |  \boldsymbol{\alpha}, r_{i} = G(\mathbf{t}_i)\right]$.

 Let $z \in \{0, \dots, v\}$ be the most common signal observed by reviewer $i$. Hence, reviewer $i$'s subjective probability associated with $z$ is greater than his subjective probability associated with any other signal $y \in \{0, \dots, v\}$, \textit{i.e.}, $\theta_{i, z} > \theta_{i, y}$.

\textbf{(If part)}
Given that $R$ is a symmetric and strictly proper scoring rule and that each reviewer is reporting only one evaluation score, the resulting score from $ R\left(\mathbf{\Phi}_i^{(1)}, r_j\right)$ can take on only two possible values: $\delta_{max}$, if $r_i = r_j$, and $\delta_{min}$ otherwise (see discussion in Section 4.3). When  reporting $r_i = \argmax_{x \in \{0, \dots, v\}}  \sum_{k = 1}^{\rho} H(x, t_{i,k}) = z$, reviewer $i$'s expected review score is $\theta_{i, z} \delta_{max} + \sum_{y \neq z}\theta_{i, y}\delta_{min}$. Given that $\theta_{i, z} > \theta_{i, y},$ for any $y \neq z$, and $\delta_{max} > \delta_{min}$, this expected review score is maximized. Thus, reporting $r_i = \argmax_{x \in \{0, \dots, v\}}  \sum_{k = 1}^{\rho} H(x, t_{i,k}) = z$ maximizes reviewer $i$'s expected review score.

\textbf{(Only-if part)}
Recall that all the elements making up the hyperparameter $\boldsymbol{\alpha}$ have the same value because reviewers' prior distributions are non-informative. Let $\mathbf{\Phi}_i^{(1)}  = (\phi_{i,0}, \dots, \phi_{i,v})$ be reviewer $i$'s estimated posterior predictive distribution computed according to the original scoring method in (\ref{eq:truth_telling_score}) when reviewer $i$ is reporting $r_i = \argmax_{x \in \{0, \dots, v\}}  \sum_{k = 1}^{\rho} H(x, t_{i,k}) = z$, \textit{i.e.}:

\begin{equation*}
\phi_{i,k} = \left\{ \begin{array}{ll}
\frac{\alpha_k +1}{1 + \sum_{x=0}^v \alpha_x} = \frac{\alpha_k +1}{(v+1)\cdot\alpha_k +1} & \textrm{if $k  = z$},\\
\frac{\alpha_k}{1+ \sum_{x=0}^v \alpha_x} = \frac{\alpha_k}{(v+1)\cdot\alpha_k +1} & \textrm{otherwise}.\\
\end{array} \right.
\end{equation*}

For contradiction's sake, suppose that reviewer $i$ maximizes his expected review score by misreporting his review and reporting $r_i = y \neq z$. Let $\tilde{\mathbf{\Phi}}_i^{(1)} = (\tilde{\phi}_{i,0}, \dots, \tilde{\phi}_{i,v})$ be reviewer $i$'s estimated posterior predictive distribution when he is misreporting his review, \textit{i.e.}:

\begin{equation*}
\tilde{\phi}_{i,k} = \left\{ \begin{array}{ll}
\frac{\alpha_k +1}{1 + \sum_{x=0}^v \alpha_x} = \frac{\alpha_k +1}{(v+1)\cdot\alpha_k +1} & \textrm{if $k = y$},\\
\frac{\alpha_k}{1+ \sum_{x=0}^v \alpha_x} = \frac{\alpha_k }{(v+1)\cdot\alpha_k +1} & \textrm{otherwise}.\\
\end{array} \right.
\end{equation*}

As discussed in Section 4.3, the term $R\left(\mathbf{\Phi}_i^{(1)}, r_j \right)$ can take on only two possible values whenever $R$ is a symmetric scoring rule. Consequently, $R\left(\tilde{\mathbf{\Phi}}_i^{(1)}, k\right) = R\left(\mathbf{\Phi}_i^{(1)}, k\right)$ for $k \neq z, y$.  A consequence of our assumption that reviewer $i$ maximizes his expected review score by misreporting his review is that $\mathbb{E}_{\mathbf{\Theta}_i^{(\rho)}} \left[R\left(\tilde{\mathbf{\Phi}}_i^{(1)}, r_j\right)\right] \geq
\mathbb{E}_{\mathbf{\Theta}_i^{(\rho)}} \left[R\left(\mathbf{\Phi}_i^{(1)}, r_j\right)\right]$. Assuming that $R$ is a symmetric and bounded proper scoring rule, this inequality becomes:

\begin{align*}
\sum_{k = 0}^v \theta_{i, k} R\left(\tilde{\mathbf{\Phi}}_i^{(1)}, k\right)
&\geq
\sum_{k = 0}^v \theta_{i, k} R\left(\mathbf{\Phi}_i^{(1)}, k\right) & \implies \label{eq:prop_3}\\
\theta_{i, z} R\left(\tilde{\mathbf{\Phi}}_i^{(1)}, z\right) + \theta_{i, y} R\left(\tilde{\mathbf{\Phi}}_i^{(1)}, y\right)  
&\geq
\theta_{i, z} R\left(\mathbf{\Phi}_i^{(1)}, z\right) + \theta_{i, y}R\left(\mathbf{\Phi}_i^{(1)}, y\right) & \implies \nonumber\\
\theta_{i, y} 
&\geq
\theta_{i, z} \left(\frac{ R\left(\mathbf{\Phi}_i^{(1)}, z\right) - R\left(\tilde{\mathbf{\Phi}}_i^{(1)}, z\right)} {R\left(\tilde{\mathbf{\Phi}}_i^{(1)}, y\right) - R\left(\mathbf{\Phi}_i^{(1)}, y\right)} \right) \nonumber
\end{align*}

The second line follows from the fact that $R\left(\tilde{\mathbf{\Phi}}_i^{(1)}, k\right) = R\left(\mathbf{\Phi}_i^{(1)}, k\right)$ for $k \neq z, y$. Regarding the last line, we have by construction that $R\left(\mathbf{\Phi}_i^{(1)}, z\right) = R\left(\tilde{\mathbf{\Phi}}_i^{(1)}, y\right) = \delta_{max}$, and $R\left(\tilde{\mathbf{\Phi}}_i^{(1)}, z\right) = R\left(\mathbf{\Phi}_i^{(1)}, y\right) = \delta_{min}$. Consequently, we obtain that $\theta_{i, y} \geq \theta_{i, z}$. As we stated before, since $z$ is the most common signal observed by reviewer $i$, then $\theta_{i, z} > \theta_{i, y}$. Thus, we have a contradiction. So, $\mathbb{E}_{\mathbf{\Theta}_i^{(\rho)}}\left[R\left(\tilde{\mathbf{\Phi}}_i^{(1)}, r_j\right)\right] < \mathbb{E}_{\mathbf{\Theta}_i^{(\rho)}} \left[R\left(       \mathbf{\Phi}_i^{(1)},  r_j\right) \right]$, \textit{i.e.}, reviewer $i$ maximizes his expected review score only if he reports $r_i = \argmax_{x \in \{0, \dots, v\}}  \sum_{k = 1}^{\rho} H(x, t_{i,k}) = z$. 
\end{proof}

In other words, the above proposition says that each reviewer should report the evaluation score most likely to be deserved by the manuscript when their prior distributions are non-informative and they are rewarded according to the scoring method in (\ref{eq:truth_telling_score}). Any other evaluation score has a lower associated subjective probability and, consequently, reporting it results in a lower expected review score. To summarize, Proposition 4 implies that the scoring method proposed in (\ref{eq:truth_telling_score}) induces honest reporting by rewarding agreements whenever reviewers' prior distributions are non-informative and the center is interested in the mode of each reviewer's observed signals. It is noteworthy that Proposition 4 does not assume that the center knows \textit{a priori} the number of observed signals ${\rho}$, thus providing more flexibility for practical applications of our method.

\section{Finding a Consensual Review}

After reviewers report their reviews and receive their review scores, there is still the question of how the center will use the reported reviews in making a suitable decision. Since reviewers are not always in agreement, belief aggregation methods must be used to combine the reported reviews into a single representative review. The traditional average method is not necessarily the best approach since unreliable reviewers might have a big impact on the aggregate review. Moreover, a consensual review is desirable because it represents a review that is acceptable by all.

In this section, we propose an adaptation of a classical mathematical method to find a consensual review. Intuitively, it works as if reviewers were constantly updating their reviews in order to aggregate knowledge from others. The scoring concepts introduced in previous sections are incorporated by the reviewers when updating their reviews. In what follows,  for the sake of generality, we assume that reviewers evaluate the manuscript  under $\rho \in \mathbb{N}^+$ criteria, \textit{i.e.}, each reviewer $i$ observes $\rho$  signals from the underlying distribution that represents the quality of the manuscript and report a vector $\mathbf{r}_i = (r_{i,1}, \dots, r_{i,\rho})$ of evaluation scores,  where $r_{i,k} \in \{0, \dots, v\}$ for all $k$. The center then estimates reviewers $i$'s posterior predictive distribution $\mathbb{E}\left[\boldsymbol\omega |  \boldsymbol{\alpha}, \mathbf{r}_{i}\right]$, referred to as $\mathbf{\Phi}_{i}^{(\rho)}$ for ease of exposition, as in (\ref{eq:est_pos_pred_dist_mult_obs}). We relax our basic model by allowing the evaluation scores in the aggregate review to take on any real value between $0$ and the best evaluation score $v$.

\subsection{DeGroot's Model}

\cite{DeGroot:1974} proposed a model that describes how a group might reach a consensus by pooling their individual opinions. When applying this model to a peer-review setting, each reviewer $i$ is first informed of others' reported reviews. In order to accommodate the information and expertise of the rest of the group, reviewer $i$ then updates his own review as follows:

\begin{equation*}
\mathbf{r}_i^{(1)} = \sum_{j=1}^n w_{i, j } \mathbf{r}_j
\end{equation*}

\noindent where $w_{i,j}$ is a weight that reviewer $i$ assigns to reviewer $j$'s reported review when he carries out this update. It is assumed that $w_{i,j} \geq 0$, for every reviewer $i$ and $j$, and $\sum_{j=1}^n w_{i,j} = 1$. In this way, each updated review takes the form of a linear combination of reported reviews, also known as a \emph{linear opinion pool}. The weights must be chosen on the basis of the relative importance that reviewers assign to their peers' reviews. The whole updating process can be written in a  more general form using matrix notation: $\mathbf{R}^{(1)} = \mathbf{W}\mathbf{R}^{(0)}$, where:

\begin{equation*}
\mathbf{W} = \left[\begin{array}{cccc}
w_{1,1} & w_{1, 2} & \cdots & w_{1, n} \\
w_{2,1} & w_{2, 2} & \cdots & w_{2, n} \\
\vdots  & \vdots  & \ddots & \vdots  \\
w_{n,1} & w_{n, 2} & \cdots & w_{n, n} \\
\end{array}
\right] \quad  \mbox{and} \quad
\mathbf{R}^{(0)} 
=
\left[\begin{array}{c}
\mathbf{r}_{1} \\
\mathbf{r}_{2} \\
\vdots\\
\mathbf{r}_{n} \\
\end{array} 
\right]
=
\left[\begin{array}{cccc}
r_{1,1}  & r_{1, 2} & \cdots & r_{1,\rho} \\
r_{2,1}  & r_{2, 2} & \cdots & r_{2, \rho} \\
\vdots  & \vdots  & \ddots & \vdots  \\
r_{n,1}  & r_{n, 2} & \cdots & r_{n, \rho} \\
\end{array}
\right]
\end{equation*}

Since all the original reviews have changed, the reviewers might wish to update their new reviews in the same way as they did before. If there is no basis for the reviewers to change their assigned weights, the whole updating process after $t$ revisions can then be represented as follows:

\begin{equation}
\label{eq:DeGroot_consensus}
\mathbf{R}^{(t)} = \mathbf{W}\mathbf{R}^{(t-1)} = \mathbf{W}^{t}\mathbf{R}^{(0)}
\end{equation}

Let $\mathbf{r}^{(t)}_i = \left(r^{(t)}_{i,1}, \dots, r^{(t)}_{i,\rho}\right)$ be reviewer $i$'s review after $t$ revisions, \textit{i.e.}, it denotes the $i$th row of the matrix $\mathbf{R}^{(t)}$. We say that a \emph{consensus} is reached if and only if $\mathbf{r}^{(t)}_i = \mathbf{r}^{(t)}_j$, for every reviewer $i$ and $j$, when $t \rightarrow \infty$.

\subsection{Review Scores as Weights}

The original method proposed by \cite{DeGroot:1974} does not encourage honesty in a sense that reviewers can assign weights to their peers' reviews however they wish so as long as the weights are consistent with the construction previously defined. Furthermore, it requires the disclosure of reported reviews to the whole group when  reviewers are weighting others' reviews, a fact which might be troublesome when the reviews are of a sensitive nature.

A possible way to circumvent the aforementioned problems is to derive weights from the original reported reviews by taking into account review scores. In particular, we assume the weight that a reviewer assigns to a peer's review is directly related to how close their estimated posterior predictive distributions are, where closeness is defined by an underlying proper scoring rule. We provide behavioral foundations for such an assumption in the following subsection. Formally, the weight that reviewer $i$ assigns to reviewer $j$'s reported review is  computed as follows:

\begin{equation}
\label{eq:consensus_weight}
w_{i, j} = \frac{\mathbb{E}_{\mathbf{\Phi}_i^{(\rho)}} \left[ \gamma R\left(\mathbf{\Phi}_j^{(\rho)}, e\right) + \lambda \right] }{ \sum_{j=1}^n  \mathbb{E}_{\mathbf{\Phi}_i^{(\rho)}} \left[ \gamma R\left(\mathbf{\Phi}^{(\rho)}_j, e\right) + \lambda \right]}
\end{equation}

\noindent that is, the weight $w_{i, j}$ is proportional to the expected review score that reviewer $i$ would receive if he had reported the review reported by reviewer $j$, where the expectation is taken with respect to reviewer $i$'s estimated posterior predictive distribution. Consequently, the weight that each reviewer indirectly assigns to his own review is always the highest because $R$ is a strictly proper scoring rule, \textit{i.e.}, $\argmax_{\mathbf{\Phi}_j^{(\rho)}} \mathbb{E}_{\mathbf{\Phi}_i^{(\rho)}} \left[ \gamma R\left(\mathbf{\Phi}_j^{(\rho)}, e\right) + \lambda \right] = \mathbf{\Phi}_i^{(\rho)}$. We assume that $\mathbb{E}_{\mathbf{\Phi}_i^{(\rho)}} \left[ \gamma R\left(\mathbf{\Phi}_j^{(\rho)}, e\right) + \lambda \right] > 0$, for every reviewer $i$ and $j$. As long as $R$ is bounded, this assumption can be met by appropriately setting the value of $\lambda$. Consequently,  $0 < w_{i, j} <1$, for every $i, j \in N$. Moreover, $\sum_{j=1}^n w_{i, j} =1$ because the denominator of the fraction in (\ref{eq:consensus_weight}) normalizes the weights so they sum to one.

In the interest of reaching a consensus, DeGroot's method in (\ref{eq:DeGroot_consensus}) is applied to the original reported reviews using the weights as defined in (\ref{eq:consensus_weight}). We show that a consensus is always reached under this proposed method whenever the review scores are positive.

\begin{proposition}
If $\mathbb{E}_{\mathbf{\Phi}_i^{(\rho)}} \left[ \gamma R\left(\mathbf{\Phi}_j^{(\rho)}, e\right) + \lambda \right] > 0$, for every reviewer $i, j \in N$, then $\mathbf{r}_{i}^{(t)} = \mathbf{r}_{j}^{(t)}$ when $t \rightarrow \infty$.
\end{proposition}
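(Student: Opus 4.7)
The plan is to recognize that the recurrence $\mathbf{R}^{(t)} = \mathbf{W}^t \mathbf{R}^{(0)}$ is a linear iteration governed by the matrix $\mathbf{W} = [w_{i,j}]$, and that reaching consensus amounts to showing that $\mathbf{W}^t$ tends to a rank-one matrix whose rows are all identical. This is the classical Markov-chain-theoretic content of DeGroot's original theorem, so the work is mostly in verifying that $\mathbf{W}$ satisfies the hypotheses that guarantee such convergence.

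First I would check that $\mathbf{W}$ is a well-defined row-stochastic matrix. By the construction in equation~(\ref{eq:consensus_weight}), each $w_{i,j}$ is the ratio of $\mathbb{E}_{\mathbf{\Phi}_i^{(\rho)}}\left[\gamma R(\mathbf{\Phi}_j^{(\rho)}, e) + \lambda\right]$ to the sum of such expectations over $j$. The hypothesis of Proposition~5 is precisely that every such expectation is strictly positive, so every $w_{i,j}$ is strictly positive, and by the normalization in the denominator $\sum_j w_{i,j} = 1$ for every $i$. Thus $\mathbf{W}$ is a row-stochastic matrix whose entries are all strictly positive.

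Next I would invoke the standard convergence result for powers of such matrices: a row-stochastic matrix with strictly positive entries is primitive (irreducible and aperiodic), so by the Perron--Frobenius theorem there exists a unique probability vector $\boldsymbol{\pi}$ such that $\boldsymbol{\pi}^T \mathbf{W} = \boldsymbol{\pi}^T$, and
\begin{equation*}
\lim_{t \to \infty} \mathbf{W}^t = \mathbf{1}\boldsymbol{\pi}^T,
\end{equation*}
where $\mathbf{1}$ is the column vector of ones. This is exactly the condition invoked by DeGroot (1974) to obtain consensus. Applying this to the review iteration yields
\begin{equation*}
\lim_{t \to \infty} \mathbf{R}^{(t)} = \lim_{t \to \infty} \mathbf{W}^t \mathbf{R}^{(0)} = \mathbf{1}\boldsymbol{\pi}^T \mathbf{R}^{(0)},
\end{equation*}
and every row of the limiting matrix equals $\boldsymbol{\pi}^T \mathbf{R}^{(0)} = \sum_{j=1}^n \pi_j \mathbf{r}_j$. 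Hence $\mathbf{r}_i^{(t)} \to \mathbf{r}_j^{(t)}$ as $t \to \infty$ for every pair $i,j \in N$, which is the claimed consensus.

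The only delicate step is the appeal to Perron--Frobenius / DeGroot, so I would either cite DeGroot (1974) directly for the implication ``strictly positive row-stochastic $\Rightarrow$ consensus,'' or briefly note that strict positivity of every entry gives both irreducibility (every state reaches every other in one step) and aperiodicity (every diagonal entry is positive, so there is a self-loop at every state). The rest of the argument is essentially bookkeeping about the construction of the weights.
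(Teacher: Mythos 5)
Your proposal is correct and follows essentially the same route as the paper's proof: both verify that the positivity hypothesis makes $\mathbf{W}$ a row-stochastic matrix with strictly positive entries, hence the transition matrix of an irreducible, aperiodic Markov chain, and both then invoke the standard limit theorem that every row of $\mathbf{W}^t$ converges to the same probability vector, which forces the rows of $\mathbf{W}^t\mathbf{R}^{(0)}$ to coincide in the limit. Your added detail identifying the limit as $\mathbf{1}\boldsymbol{\pi}^T$ with $\boldsymbol{\pi}$ the stationary distribution is a harmless refinement of the same argument.
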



\begin{proof}
Due to the assumption that $\mathbb{E}_{\mathbf{\Phi}_i^{(\rho)}} \left[ \gamma R\left(\mathbf{\Phi}_j^{(\rho)}, e\right) + \lambda \right] > 0$, for every reviewer $i$ and $j$, all the elements of the matrix $\mathbf{W}$ in (\ref{eq:DeGroot_consensus}) are strictly greater than zero and strictly less than one. Moreover, the sum of the elements in any row is equal to one. Consequently, $W$ can be regarded as a $n\times n$ stochastic matrix, or a one-step transition probability matrix of a Markov chain with $n$ states and stationary probabilities. Furthermore, the underlying Markov chain is aperiodic and irreducible. Therefore, a standard limit theorem of Markov chains applies in this setting, namely given an  aperiodic and irreducible Markov chain with transition probability matrix $\mathbf{W}$, every row of the matrix $ \mathbf{W}^{t}$ converges to the same probability vector when $t \rightarrow\infty$ \citep{Ross:1995}. 
\end{proof}

Recall that $\mathbf{r}_i^{(t)} = \sum_{j=1}^n w_{i,j}\mathbf{r}_j^{(t-1)} = \sum_{j=1}^n w_{i,j}\sum_{k=1}^n w_{j,k} \mathbf{r}_k^{(t-2)} = \dots =  \sum_{j=1}^n \beta_j\mathbf{r}_j^{(0)}$, where $\mathbf{\beta} = (\beta_1, \beta_2, \dots, \beta_n)$ is a probability vector that incorporates all the previous weights. This equality implies that the consensual review can be represented as an instance of the linear opinion pool. Hence, an interpretation of the proposed method is that reviewers reach a consensus regarding the weights in (\ref{eq:DeGroot_consensus}). When $\mathbf{\beta} = (1/n, 1/n, \dots, 1/n)$ in the above equality, the underlying linear opinion pool becomes the average of the reported evaluation scores. A drawback with an averaging approach is that it does not take into account the scoring concepts introduced in the previous sections, a fact which might favor unreliable reviewers. Moreover, disparate reviews might have a big impact on the resulting aggregate review. On the other hand, under our approach to find $\mathbf{\beta}$, reviewers weight down reviews far from their own reviews, which implies that the proposed method might be less influenced by disparate reviews. A numerical example in subsection 6.4 illustrates this point. The experimental results discussed in Section 7 show that our method to find a consensual review is consistently more accurate than the traditional average method.

\subsection{Behavioral Foundation}

The major assumption regarding our method for finding a consensual review is that reviewers assign weights according to (\ref{eq:consensus_weight}). An interesting interpretation of (\ref{eq:consensus_weight}) arises when the proper scoring rule $R$ is \emph{effective} with respect to a metric $M$. Formally, given a metric $M$ that assigns a real number to any pair of probability vectors, which can be seen as the  shortest distance between the two probability vectors, we say that a scoring rule $R$ is effective with respect to $M$ if the following relation holds for all probability vectors $\mathbf{\Phi}_i^{(\rho)}$, $\mathbf{\Phi}_j^{(\rho)}$, and $\mathbf{\Phi}_k^{(\rho)}$ \citep{Friedman:1983}:

\begin{equation*}
M\left(\mathbf{\Phi}_i^{(\rho)}, \mathbf{\Phi}_j^{(\rho)}\right) < M\left(\mathbf{\Phi}_i^{(\rho)}, \mathbf{\Phi}_k^{(\rho)}\right)
\iff
\mathbb{E}_{\mathbf{\Phi}_i^{(\rho)}} \left[\gamma R\left(\mathbf{\Phi}_j^{(\rho)}, e\right) + \lambda \right] > \mathbb{E}_{\mathbf{\Phi}_i^{(\rho)}} \left[\gamma R\left(\mathbf{\Phi}_k^{(\rho)}, e\right) + \lambda \right]
\end{equation*}

Thus, when $R$ is effective with respect to a metric $M$, the higher the weight one reviewer assigns to a peer's review in (\ref{eq:consensus_weight}), the closer their estimated posterior predictive distributions are according to the metric $M$. In other words, when using effective scoring rules, reviewers naturally prefer reviews close to their own reported reviews, and the weight that each reviewer assigns to his own review is always the highest one. Hence, in spirit, the resulting learning model in (\ref{eq:DeGroot_consensus}) can be seen as a model of \emph{anchoring} \citep{Tversky:1974} in a sense that the review of a reviewer is an ``anchor", and subsequent updates are biased towards reviews close to the anchor.

\cite{Friedman:1983} discussed some examples of effective scoring rules. For example, the quadratic scoring rule in (\ref{eq:quad_scor_rul}) is effective with respect to the root-mean-square deviation, the spherical scoring rule is effective with respect to a renormalized $L_2$-metric, whereas the logarithmic scoring rule is not effective with respect to any metric \citep{Nau:1985}.

\subsection{Numerical Example}

Consider a peer-review process where the best evaluation score is four ($v=4$), three reviewers $(n=3)$ observe three signals ($\rho = 3$), and they report the following reviews:  $\mathbf{r}_{1} = (0, 1, 3)$, $\mathbf{r}_{2} = (0, 2, 3)$, and $\mathbf{r}_3 = (4, 4, 4)$. Consequently,  the matrix $\mathbf{R}^{(0)}$ in (\ref{eq:DeGroot_consensus}) is:

\begin{equation*}
\mathbf{R}^{(0)} = \left[\begin{array}{ccc}
0 & 1 & 3\\
0 & 2 & 3  \\
4 & 4 & 4 \\
\end{array}
\right]
\end{equation*}

Consider the hyperparameter $\boldsymbol\alpha = (1,1,1,1,1)$. Hence, the estimated posterior predictive distributions are $\mathbf{\Phi}_1^{(3)} = (2/8, 2/8, 1/8, 2/8, 1/8)$, $\mathbf{\Phi}_2^{(3)} = (2/8, 1/8, 2/8, 2/8, 1/8)$, and $\mathbf{\Phi}_3^{(3)} = (1/8, 1/8, 1/8, 1/8, 4/8)$. Assume that $R$ in (\ref{eq:consensus_weight}) is the quadratic scoring rule in (\ref{eq:quad_scor_rul}), and let $\gamma = 1$ and $\lambda = 1$ in order for the resulting expected values in (\ref{eq:consensus_weight}) to be always positive. We obtain:

\begin{align*}
\mathbf{W} = \left[\begin{array}{ccc}
0.3545 & 0.3455 & 0.3000 \\
0.3455 & 0.3545 & 0.3000 \\
0.3158 & 0.3158	& 0.3684 \\
\end{array}
\right]
\end{align*}

Focusing on the main diagonal of $\mathbf{W}$, we notice that each reviewer always assigns the highest weight to his own review. From the first row of $\mathbf{W}$, we can see that reviewer $1$ assigns a high weight to his review and to reviewer $2$'s review, and a lower weight to reviewer $3$'s review. This happens because reviewer 3's review is very distant from the others' reported reviews. We can draw similar conclusions from the other rows. We then obtain the following weights when carrying out DeGroot's method with the weights calculated according to (\ref{eq:consensus_weight}):

\begin{equation*}
\lim_{t\rightarrow \infty}\mathbf{W}^{t} = \left[\begin{array}{ccc}
0.3390 & 0.3390 & 0.3220 \\
0.3390 & 0.3390 & 0.3220 \\
0.3390 & 0.3390 & 0.3220 \\
\end{array}
\right]
\end{equation*}

\noindent and, consequently, the consensual review is represented by any row of the matrix $\lim_{t\rightarrow \infty}\mathbf{W}^{t}\mathbf{R}^{(0)}$, which results in the vector $(1.288, 2.305, 3.322)$. It is worthwhile to discuss an interesting point regarding the above example. The aggregate review would be $(1.333, 2.333, 3.333)$ if it was equal to the average of the reported reviews. Hence, reviewer $3$'s review would have more impact on the aggregate review because the evaluation scores in the average review are all greater than the corresponding evaluation scores in the consensual review. In our proposed method, the influence of reviewer $3$ on the aggregate review is diluted because his review is very different from the others' reviews. More formally, reviewer $3$'s estimated posterior predictive distribution is very distant from the others' estimated posterior predictive distributions when measured according to the root-mean-square deviation, the metric associated with the quadratic scoring rule.

\section{Experiments}

In this section, we describe a peer-review experiment designed to test the efficacy of both the proposed scoring method and the proposed aggregation method. In the following subsections, we discuss Amazon's Mechanical Turk, the platform used in our experiments, the experimental design, and our results.

\subsection{Amazon's Mechanical Turk}

Amazon's Mechanical Turk\footnote{http://www.mturk.com} (AMT) is an online labor market originally developed for human computation tasks, \textit{i.e.}, tasks that are relatively easy for human beings, but nonetheless challenging or even currently impossible for computers, \textit{e.g.}, audio transcription, filtering adult content, extracting data from images, \textit{etc}. Several studies have shown that AMT can effectively be used as a means of collecting valid data in these settings \citep{Snow:2008, Marge:2010}.

More recently, AMT has also been used as a platform for conducting behavioral experiments \citep{Mason:2012}. One of the advantages that it offers to researchers is the access to a large, diverse, and stable pool of people willing to participate in the experiments for relatively low pay, thus simplifying the recruitment process and allowing a faster iteration between developing theory and executing experiments. Furthermore, AMT provides an easy-to-use built-in mechanism to pay workers that greatly reduces the difficulties of compensating individuals for their participation in the experiments, and a built-in reputation system  that helps requesters distinguish between good and bad workers and, consequently, to ensure data quality. Numerous studies have shown that results of behavioral studies conducted on AMT are comparable to results obtained in other online domains as well as in offline settings \citep{Buhrmester:2011, Horton:2011}, thus providing evidence that AMT is a valid means of collecting behavioral data.

\subsection{Experimental Design}

We designed a task on AMT that required workers, henceforth referred to as reviewers, to review 3 short texts under three different criteria: \emph{Grammar}, \emph{Clarity}, and \emph{Relevance}. The first two texts were extracts from published poems, but with some original words intentionally replaced by misspelled words. The third text contained random words presented in a semi-structured way. All the details regarding the texts are included in the appendix. For each text, three questions were presented to the reviewers, each one having three possible responses ordered in decreasing negativity order:

\begin{itemize}
\item Grammar: does the text contain misspellings, syntax errors, \textit{etc}.?

	\begin{itemize}
	\item A lot of grammar mistakes
	\item A few grammar mistakes
	\item No grammar mistakes
	\end{itemize}

\item Clarity: does the text, as a whole, make any sense?

	\begin{itemize}
	\item The text does not make sense
	\item The text makes some sense
	\item The text makes perfect sense
	\end{itemize}

\item Relevance: could the text be part of a poem related to love?

	\begin{itemize}
	\item The text cannot be part of a love poem
	\item The text might be part of a love poem
	\item The text is definitely part of a love poem
	\end{itemize}

\end{itemize}

Words with subjective meaning were intentionally used so as to simulate the subjective nature of the evaluation scores in a review, \textit{e.g.}, ``a lot", ``a few", \textit{etc.} Each individual response was translated into an evaluation score inside the set $\{0, 1 ,2\}$. The most negative response received the score $0$, the middle response received the score $1$, and the most positive response received the score $2$. Thus, each reviewer reported a vector of 9 evaluation scores (3 texts times 3 criteria).

We recruited 150 reviewers on AMT, all of them residing in the United States of America and older than 18 years old. They were required to accomplish the task in at most 20 minutes. Reviewers were split into 3 groups of equal size. After accomplishing the task, every reviewer in every group received a payment of $20$ cents. A study done by \cite{Ipeirotis:2010} showed that more than 90$\%$ of the tasks on AMT have a baseline payment less than $\$0.10$, and $70\%$ of the tasks have a baseline payment less than $\$0.05$. Thus, our baseline payment was much higher than the average payment from other jobs posted to the AMT marketplace.

Each reviewer was randomly assigned into one of the three groups. Reviewers in two of the groups, the treatment groups, could earn an additional \emph{bonus} of up to $10$ cents. Reviewers in the first treatment group, referred to as the \emph{Bonus Group} (BG), were informed that their bonuses would be proportional to the number of reviews similar to their reported reviews. Reviewers in the second treatment group, the \emph{Bonus and Information Group} (BIG), received similar information, but they also received a short summary of some theoretical results presented in this paper:

\begin{quote}
``A group of researchers from the University of Waterloo (Canada) formally showed that the best strategy to maximize your expected bonus in this setting is by being honest, \textit{i.e.}, by considering each question thoroughly and deciding the best answers according to your personal opinion".
\end{quote}

Members of the third group, the \emph{Control Group} (CG), neither received extra explanations nor bonuses. Their reported reviews were used as the control condition. Bonuses were computed  by rewarding agreements as described in Section 4.3. Due to the one-shot nature of this peer-review task, we assumed that Dirichlet priors were non-informative with hyperparameter $\boldsymbol\alpha = (1,1,1)$. For each reported evaluation score, there could be at most 49 similar reported evaluation scores because each group had 50 members. We then used the formula $\frac{10}{9}\times \frac{\# \mbox{agreements}}{49}$ to calculate the reward for an individual evaluation score. Given that each reviewer reported 9 evaluation scores, if the evaluation scores reported by all members of a group were the same, then all group members would received the maximum bonus of 10 cents. The provided bonuses can be seen as review scores. Our primary objective when performing this experiment was to empirically investigate the extent to which providing review scores affects the quality of the reported reviews.

\subsection{Gold-Standard Evaluation Scores}

Since the source and original content of each text were known \textit{a priori}, \textit{i.e.}, before the experiments were conducted, we were able to derive \emph{gold-standard reviews} for each text. In order to avoid confirmation bias\footnote{The tendency to interpret information in a way that confirms one's preconceptions \citep{Plous:1993}.}, we asked five professors and tutors from the English and Literature Department at the University of Waterloo to provide their reviews for each text. We set the \emph{gold-standard evaluation score} for each criterion in a text as the median of the evaluation scores reported by the professors and tutors. Coincidentally, each median value was also the mode of the underlying evaluation scores. All the evaluation scores reported by the professors and tutors as well as the respective gold-standard evaluation scores are in the appendix.

\subsection{Hypotheses}

Our first research question was whether or not providing review scores through pairwise comparisons makes the reported reviews more accurate, \textit{i.e.}, closer to the gold-standard reviews. Based on our theoretical results, our hypothesis was:

\begin{hypothesis} 
The average accuracy of group BIG is greater than the average accuracy of group BG, which in turn is greater than the average accuracy of group CG.
\end{hypothesis}

In other words, the resulting reviews would be on average more accurate when reviewers received review scores, and the extra explanation regrading the theory behind the scoring method would provide more credibility to it, thus making the reviews more accurate. Regarding the resulting bonuses, since honest reporting maximizes reviewers' expected review scores in our model, our second hypothesis was:

\begin{hypothesis} 
The average bonus received by members of group BIG is greater than the average  bonus received by members of group BG, which in turn is greater than the average  bonus received by members of group CG.
\end{hypothesis}

In order to test whether or not Hypothesis 2 was true, we used the bonus the members of group CG would have received had they received any bonus. It is important to note that Hypothesis 1 was measured by comparing how close the reported reviews were to the gold-standard reviews, whereas Hypothesis 2 was measured by making pairwise comparisons between reported reviews: the higher the number of agreements, the greater the resulting bonus.

Another metric used to compare groups' performance was the \emph{task completion time}. The amount of time spent by reviewers on the reviewing task can be seen as a proxy for the effort they exerted to complete the task. Regarding this metric, we expected reviewers who received review scores to be more cautious when completing their tasks. Moreover, the extra explanation regrading the theory behind the scoring method would provide more credibility to it, thus making the members of group BIG work harder on the task. Hence, our third hypothesis was:

\begin{hypothesis} 
The average task completion time of group BIG is greater than the average task completion time of group BG, which in turn is greater than the average task completion time of group CG.
\end{hypothesis}

Finally, we believed that the consensual review, computed as described in Section 6, would be more accurate than the average review since disparate  reviews are less likely to have a big influence on the consensual review than on the average review. Hence, our fourth hypothesis was:

\begin{hypothesis} 
The average accuracy of the consensual review is greater than the average accuracy of the average review.
\end{hypothesis}

\subsection{Experimental Results}

\subsubsection{Accuracy on Individual Criteria}

In our first analysis, we computed the absolute difference between each reported evaluation score and the corresponding gold-standard evaluation score. Thus, the outcome measure was an integer with a value between zero and two, and the closer this value was to zero, the better the resulting accuracy. Table~\ref{tab:mean_values_individual} shows the average accuracy of each group on individual criteria.

\begin{table}[t]
\caption{Accuracy of each group on individual criteria. The average of the absolute difference between the reported evaluation scores and the corresponding gold-standard evaluation scores is shown below each group. For each criterion, the lowest average is highlighted in bold. The standard deviations are in parenthesis. One-tailed p-values resulting from rank-sum tests are shown in the last three columns. Given the notation A-B, the null hypothesis is that the outcome measures resulting from groups A and B are equivalent, and the alternative hypothesis is that the outcome measure  resulting from group A is less  than the outcome measure  resulting from group B.}
\label{tab:mean_values_individual}
\centering
\begin{tabular}{l l c c c c c c} 
\toprule
 &  &       &       &       &  \multicolumn{3}{c}{$p$-values} \\ \cmidrule(r){6-8}   
 &  & BG & BIG & CG &  BIG-BG &  BIG-CG &  BG-CG\\
\midrule
       & Grammar   & 0.5000   & \textbf{0.3200}   & 0.4400 & 0.035** & 0.110 & 0.726 \\
       &           & (0.5051) & (0.4712) & (0.5014) \\ 
Text 1 & Clarity   & 0.8200   & \textbf{0.6200}   & 0.8600 & 0.065* & 0.052* & 0.413  \\
       &           & (0.6606) & (0.6024) & (0.7287) \\ 
       & Relevance & 0.2200   & \textbf{0.2000}   & 0.3000 & 0.484 & 0.213 & 0.230  \\
       &           & (0.5067) & (0.4518) & (0.5803) \\ 
       \\
       & Grammar   & 0.4400   & \textbf{0.3600}   & 0.3800 & 0.209 & 0.420 & 0.729  \\
       &           & (0.5014) & (0.4849) & (0.4903) \\ 
Text 2 & Clarity   & 0.5000   & \textbf{0.3800}   & 0.5400 & 0.155 & 0.067* & 0.325  \\
       &           & (0.6468) & (0.6024) & (0.6131) \\ 
       & Relevance & \textbf{0.4400}   & 0.6400   & 0.6600 & 0.977 & 0.419 & 0.014**  \\
       &           & (0.5014) & (0.4849) & (0.4785)\\ 
       \\
       & Grammar   & \textbf{0.7600 }  & 0.7800   & 1.0200 & 0.539	& 0.077* & 0.061*  \\
       &           & (0.8466) & (0.8640) & (0.8449)\\ 
Text 3 & Clarity   & 0.1400   & \textbf{0.0000}   & 0.1600 & 0.006** & 0.002** & 0.301 \\
       &           & (0.4046) & (0.0000) & (0.3703)\\
       & Relevance & 0.1200   & \textbf{0.1000}   & 0.2000 & 0.491 & 0.112 & 0.122 \\ 
       &           & (0.4352) & (0.3642) & (0.4949)\\ 
\bottomrule
*  $\mbox{   } p \leq 0.1$\\
** $p \leq 0.05$
\end{tabular}
\end{table}

Focusing first on the groups BG and BIG, the group BIG is the most accurate group on all criteria, except for the criterion Relevance in Text 2 and the criterion Grammar in Text 3. This result is statistically significant with $p\mbox{-value} \leq 0.1$ in three out of the seven cases in which BIG is more accurate than BG. In two out of these three statistically significant cases, this result is also statistically significant with $p\mbox{-value} \leq 0.05$. BG is more accurate than BIG in only two criteria. This result is only statistically significant for the criterion Relevance in Text 2 ($p\mbox{-value} \leq 0.05$).

The group CG, the control condition that involved no incentives beyond the baseline compensation offered for completing the task, never outperforms both BG and BIG at the same time, and it is the less accurate group in seven out of nine criteria. In two (respectively, four) occasions, CG is statistically significantly less accurate than BG (respectively BIG) with $p\mbox{-value} \leq 0.1$.

Giving these results, we conclude that Hypothesis 1 is true for individual criteria, \textit{i.e.}, the resulting reviews are on average more accurate when using review scores, and the extra explanation regrading the theory behind the scoring method seems to provide more credibility to it, thus improving the accuracy of the reported reviews.

\subsubsection{Aggregate Accuracy}

We also computed the aggregate accuracy of each group for each text as well as for the whole task. In the former case, the outcome measure was the sum of the absolute difference between each reported evaluation score for a given text and the corresponding gold-standard evaluation score. For example, given $(0, 1, 2)$ as the reported evaluation scores for Text 1, and $(1, 2, 2)$ as the corresponding gold-standard evaluation scores, the outcome measure for Text 1 would be $|0 - 1| + |1 - 2| + |2 - 2| = 2$. For the whole task, we summed the absolute differences across all criteria and texts. Table~\ref{tab:mean_values_aggregate} shows the aggregate accuracy of each group.

\begin{table} [t]
\caption{Aggregate accuracy of each group. The average of the sum of the absolute difference between the reported evaluation scores and the corresponding gold-standard evaluation scores is shown below each group. For each text and for the whole task, the lowest average is highlighted in bold. The standard deviations are in parenthesis. One-tailed p-values resulting from rank-sum tests are given in the last three columns. Given the notation A-B, the null hypothesis is that the outcome measures resulting from groups A and B are equivalent, and the alternative hypothesis is that the outcome measure resulting from group A is less than the outcome measure resulting from group B.}
\label{tab:mean_values_aggregate}
\centering
\begin{tabular}{l c c c c c c} 
\toprule
 &        &       &       &  \multicolumn{3}{c}{$p$-values} \\ \cmidrule(r){5-7}   
 &  BG & BIG & CG &  BIG-BG &  BIG-CG &  BG-CG\\
\midrule
Text 1 & 1.5400   & \textbf{1.1400}   & 1.6000 & 0.043** & 0.085* & 0.588 \\
       & (1.1287) & (1.0304) & (1.4142)\\ 
Text 2 & \textbf{1.3800}   & \textbf{1.3800}   & 1.5800 & 0.547 & 0.163 & 0.148 \\
       & (1.0669) & (0.9666) & (0.9916)\\ 
Text 3 & 1.0200   & \textbf{0.8800}   & 1.3800 & 0.394 & 0.020** & 0.052* \\
       & (1.1865) & (0.9179) & (1.1933)\\ 
Overall& 3.9400   & \textbf{3.4000}   & 4.5600 & 0.110 & 0.002** & 0.064* \\
       & (2.2352) & (1.6903) & (2.1301)\\ 
\bottomrule
*  $\mbox{   } p \leq 0.1$\\
** $p \leq 0.05$
\end{tabular}
\end{table}

For every single text as well as for the overall task, members of the group CG report less accurate reviews than members of the group BG and the group BIG. For the group BG, this result is statistically significant for Text 3 and for the overall task ($p\mbox{-value} \leq 0.1$). For the group BIG, this result is statistically significant for Text 1 ($p\mbox{-value} \leq 0.1$), Text 3 ($p\mbox{-value} \leq 0.05$), and for the whole task ($p\mbox{-value} \leq 0.05$). Thus, the experimental results suggest  that providing review scores produces a significant improvement in quality over the control condition. Moreover, providing an extra explanation about the theory behind the scoring method improves the final quality of the reviews because, on average, the reviews from group BIG are more accurate than the reviews from group BG. This result is statistically significant for Text 1 ($p\mbox{-value} \leq 0.05$). Therefore, we conclude that Hypothesis 1 is also true on the aggregate level.

\subsubsection{Bonus}

The average bonus per group is shown in the first row of Table~\ref{tab:rewards_time}. From it, we conclude that Hypothesis 2 is true, \textit{i.e.}, the average bonus received by members of BIG is greater than the average bonus received by members of BG, which in turn is greater than the average bonus hypothetically received by members of CG. All these results are statistically significant with $p\mbox{-value} \leq 0.05$. In other words, providing review scores and informing reviewers about the theory behind the scoring method do indeed increase the number of reported reviews that are similar.

\begin{table}
\caption{Average bonus and completion time per group. The highest average values are highlighted in bold. The standard deviations are in parenthesis. One-tailed p-values resulting from rank-sum tests are given in the last three columns. Given the notation A-B, the null hypothesis is that the outcome measures resulting from groups A and B are equivalent, and the alternative hypothesis is that the outcome measure resulting from group A is greater than the  outcome measure resulting from group B.}
\label{tab:rewards_time}
\centering
\begin{tabular}{l c c c c c c} 
\toprule
 &        &       &       &  \multicolumn{3}{c}{$p$-values} \\ \cmidrule(r){5-7}   
 &  BG & BIG & CG &  BIG-BG &  BIG-CG &  BG-CG\\
\midrule
Bonus  & 0.053    & \textbf{0.058}   & 0.050 & $<$ 0.0005** & $<$0.0005** & 0.0025** \\
       & (0.0086) & (0.0073) & (0.0078)\\ 
Time   & 178.66 & \textbf{215.90}   & 196.36 & 0.0232** & 0.0257** & 0.4208\\
       & (87.4495)& (127.7471) & (149.0788)\\ 
\bottomrule
*  $\mbox{   } p \leq 0.1$\\
** $p \leq 0.05$
\end{tabular}
\end{table}

Interestingly, there is a strong negative correlation between bonuses and the aggregate absolute error for the whole task shown in the fourth row of Table~\ref{tab:mean_values_aggregate}, even though the former is computed by making pairwise comparisons between reported reviews, whereas the latter is computed by comparing reported reviews with gold-standard reviews. The Pearson correlation coefficients for BG, BIG, and CG are, respectively, $-0.73$, $-0.79$, and $-0.72$. This result implies that there exists a strong positive correlation between honest reporting and accuracy in this task, a fact which is in agreement with our theoretical model.

\subsubsection{Completion Time}

The average completion time per group is shown in the second row of Table~\ref{tab:rewards_time}. We start by noting that Hypothesis 3 is not true. Surprisingly, the average time spent on the task by members of the group BG is statistically equivalent to the average time spent by members of the group CG since the null hypothesis cannot be rejected. 
The average completion time by members of the group BIG is the highest one amongst the three groups, and this result is statistically significant with $p\mbox{-value} \leq 0.05$. A possible explanation for this result is that reviewers work on the reviewing task more seriously by taking more time to complete it when they receive a brief explanation regarding some theoretical results of the proposed scoring method, whereas they could be quickly guessing how their peers would review the texts when the extra explanation about the theoretical results is not provided.

It is noteworthy that even though the average values might suggest that spending more time reviewing the texts results in higher bonuses and lower overall absolute errors, we do not find any significant correlation between these variables at an individual level.

\subsubsection{Consensus}

\begin{table}[t]
\caption{Accuracy of the average review (AVG) and the consensual review (CR) per group.
The average of the absolute difference between the aggregate evaluation scores and the corresponding gold-standard evaluation scores is shown per group and criteria. For each criterion, the lowest average absolute difference in each group is highlighted in bold. The standard deviations are in parenthesis. A total of 1000 bootstrap resamples were used.}
\label{tab:consensual_average}
\centering
\begin{tabular}{l l c c c c c c} 
\toprule
 & & \multicolumn{2}{c}{BG} &  \multicolumn{2}{c}{BIG} & \multicolumn{2}{c}{CG}\\     
     \cmidrule(r){3-4}   \cmidrule(r){5-6} \cmidrule(r){7-8}
 &    & AVG & CR & AVG & CR & AVG & CR \\
\midrule
       & Grammar   & \textbf{0.2622} & 0.2781   & 0.1238   & \textbf{0.1086} & \textbf{0.1285} & 0.1307 \\
       &           & (0.0920) & (0.1015)        & (0.0680) & (0.0631) & (0.0800) & (0.0829)    \\
Text 1 & Clarity   & 0.8219   & \textbf{0.8159} & 0.6211 & \textbf{0.6078} & 0.8633 & \textbf{0.8477} \\
       &           & (0.0903) & (0.0963)        & (0.0850) & (0.0987) & (0.1031) & (0.1156)  \\ 
       & Relevance & 0.2188   & \textbf{0.1462} & 0.2020 & \textbf{0.1382} & 0.2996 & \textbf{0.2156} \\
       &           & (0.0676) & (0.0536)        & (0.0622) & (0.0512) &  (0.0783) & (0.0696) \\ 
       \\
       & Grammar   & \textbf{0.1643} & 0.1667   & 0.0754 & \textbf{0.0698} & 0.0716 & \textbf{0.0679}  \\
       &           & (0.0819) & (0.0855)        & (0.0571) & (0.0541) & (0.545) & (0.0531)   \\ 
Text 2 & Clarity   & 0.4965   & \textbf{0.4314} & 0.3812 & \textbf{0.3036} & 0.5405 & 	\textbf{0.5022}  \\
       &           & (0.0879) & (0.0970)        & (0.0858) & (0.0860) & (0.0857) & (0.0987)   \\ 
       & Relevance & 0.3590   & \textbf{0.3508} & \textbf{0.4360} & 0.4957 & \textbf{0.4996 }& 0.5642\\
       &           & (0.0804) & (0.0933)        & (0.0948) & (0.1061) & (0.0909) & (0.0999) \\ 
       \\
       & Grammar   & 0.7598  & \textbf{0.6976}  & 0.7792 & \textbf{0.7198} & \textbf{1.0231 }& 1.0288\\
       &           & (0.1204) & (0.1473) &  (0.1222) &(0.1505) & (0.1161) & (0.1455)\\ 
Text 3 & Clarity   & 0.1379   & \textbf{0.0859} & \textbf{0.0000} & \textbf{0.0000} & 0.1615 & \textbf{0.1115}  \\
       &           & (0.0565) & (0.0400) &  (0.0000) & (0.0000) & (0.0522) & (0.0453) \\
       & Relevance & 0.1197   & \textbf{0.0699} & 0.1019 & \textbf{0.0597} & 0.2022 & \textbf{0.1320} \\ 
       &           & (0.0619) & (0.0400) & (0.0521) & (0.0335) & (0.0696) & (0.0537)\\ 
\bottomrule
\end{tabular}
\end{table}

Lastly, we tested the accuracy of the method proposed in Section 6 to find a consensual review. We compared the resulting consensual review with the average review by using a bootstrapping technique. For each group of reviewers, we randomly resampled with replacement reviews from the original dataset so as to obtain \emph{bootstrap resamples}. The size of each bootstrap resample was equal to the size of the original dataset, \textit{i.e.}, each bootstrap resample contained 50 data points (the original number of reviewers), each one consisting of 9 evaluation scores.

For each bootstrap resample, we aggregated evaluation scores individually using both the proposed method for finding a consensual review and the average method. For each evaluation score, the weight that each reviewer $i$ assigned to reviewer $j$'s reported evaluation score was computed according to equation (\ref{eq:consensus_weight}). The proper scoring rule $R$ and the constants $\gamma$ and $\lambda$ were set so as to reward agreement as in Section 4.3. Given that the best evaluation score in this task was $v=2$ and $\alpha = (1,1,1)$, each element of a reviewer's estimated posterior predictive distribution in (\ref{eq:est_post_pred_dist}) could take on only two values: $0.25$ and $0.5$. Consequently, the numerator in (\ref{eq:consensus_weight}) could take on only two values: $0.5$, if reviewer $i$ and $j$'s reported evaluation scores were the same, and $0.25$ otherwise.

After aggregating evaluation scores, we computed the accuracy of each aggregation method. The outcome measure was the absolute difference between each aggregate evaluation score and the corresponding gold standard score. Thus, the outcome measure was an integer with a value between zero and two, and the closer this value was to zero, the better the resulting accuracy. Table~\ref{tab:consensual_average} shows the average accuracy by group resulting from a total of 1000 bootstrap resamples.

Table~\ref{tab:consensual_average} shows that consensual evaluation scores are more accurate than average evaluation scores in 20 out of 27 cases, and equally accurate in one case. It comes as no surprise that consensual evaluation scores are more accurate in groups where review scores were provided since these groups reported more accurate reviews and their reported reviews were more similar, as previously discussed. We performed a statistical analysis to investigate whether or not these differences in accuracy are statistically significant. Since we used the same bootstrap resamples for both aggregation methods, the Wilcoxon signed-rank test was used. The null hypothesis was that the outcome measures resulting from both aggregation methods are equivalent. The alternative hypothesis was that the outcome measure resulting from the consensual method was less than the outcome measure resulting from the average method, which implies that the former is more accurate than the latter. All the resulting 27 $p$-values are extremely small $\left(< 10^{-10}\right)$. Therefore, we conclude that Hypothesis 4 is indeed true, \textit{i.e.}, the proposed method for finding a consensual review is, on average, more accurate than the average approach in this experiment. As discussed in Section 6, we believe this result happens because disparate reported reviews are less likely to have a big influence on the consensual review than on the average review.

\section{Conclusion}

We proposed a scoring method built on strictly proper scoring rules that induces honest reporting when outcomes are not observable. We illustrated the mechanics behind our scoring method by applying it to the peer-review process.
In order to do so, we modeled the peer-review process using a Bayesian model where the uncertainty regarding the quality of the manuscript is taken into account. The main assumptions in our model are that reviewers cannot be influenced by other reviewers, and reviewers are Bayesian decision-makers.

We then showed how our scoring method can be used to evaluate reported reviews and to encourage honest reporting by risk-neutral reviewers. The proposed method assigns scores based on how close reported reviews are, where closeness is defined by an underlying proper scoring rule. Under the aforementioned assumptions, we showed that risk-neutral reviewers strictly maximize their expected scores by honestly disclosing their reviews. We also proposed an extension of our model and scoring method to scenarios where reviewers evaluate a manuscript under several criteria. We discussed how honest reporting is related to accuracy in our model: when reviewers report honestly, the distribution of reported reviews convergences to the distribution that represents the quality of the manuscript as the number of reported reviews increases. 

Since all reviews are not always in agreement, we suggested an adaptation of the  method proposed by \cite{DeGroot:1974} to find a consensual review. Intuitively, the proposed method works as if the reviewers were going through several rounds of discussion, where in each round they are informed about others' reported reviews, and they update their own reviews using our predefined method in order to reach a consensus. Formally, each updated review is a convex combination of reported reviews, where review scores are used as part of the weights that reviewers assign to their peers' reviews. We showed that the resulting method always converges to a consensual review when reviewers' expected review scores are positive, and we provided behavioral foundations for the aggregation method.

We tested the efficacy of both the proposed scoring method and the proposed aggregation method on a peer-review experiment using Amazon's Mechanical Turk. Our experimental results corroborated the relationship between honest reporting and accuracy in our model. We empirically showed that providing review scores through pairwise comparisons results in more accurate reviews than the traditional peer-review process, where reviewers have no direct incentives for expressing their true reviews. Moreover, reviewers tended to agree more with each other when they received review scores. In addition, our method for finding a consensual review outperformed the traditional average method in our peer-review experiments.

For ease of exposition, our discussion on peer review was focused on scientific communication. However, our model and scoring method are readily applied to most peer-review settings, \textit{e.g.}, academic courses, clinical peer review, \textit{etc}. Moreover, our proposed method to incentivize honest reporting is readily applied to different domains. For example, our method can be used to incentivize honest feedback in reputation systems, where individuals rate a product/service after experiencing it, and to induce honest evaluation of different strategic plans in an organization's strategic planning process. The proposed aggregation method is also general in a sense that it can be applied to any decision analysis process where experts express their opinions through probability distributions over a set of exhaustive and mutually exclusive outcomes.

Given the positive results obtained in our peer-review experiments, an interesting open question is whether or not the methods proposed in this paper would perform as well in other domains, such as in the aforementioned reputation systems and strategic planning in organizations. Another question worth contemplating is whether or not incentives other than from the received scores play a role in our scoring method. For example, one can conjecture that altruism may play an important role in our scoring method. In our peer-review experiments, the performance of the reviewers not only affect their own review scores, but also the review scores of their peers. In other words, if reviewers do not put enough effort into reporting high-quality reviews, not only might they receive low review scores, but other reviews evaluated based on those erroneous reviews might also receive low review scores. Thus, an interesting future work is to investigate whether or not experts, in general, have an altruistic motive to put more effort into the underlying task in order to maximize the potential payoffs of their peers.

\section*{Acknowledgments}

The authors thank Carol Acton, Katherine Acheson, Stefan Rehm, Susan Gow, and Veronica Austen for providing gold-standard reviews for our experiments.

\section*{Appendix}

In this appendix, we describe the texts used in our experiments as well as the gold-standard reviews reported by five professors and tutors from the English and Literature Department at the University of Waterloo, henceforth referred to as the \emph{experts}.

\subsubsection*{Text 1}

An excerpt from the ``Sonnet XVII" by \cite{Neruda:2007}. Intentionally misspelled words are highlighted in bold.

\begin{quote}
``I do not love you as if you \textbf{was} salt-rose, or topaz,\\
or the \textbf{arrown} of carnations that spread fire:    \\
I love you as certain dark things are loved,  \\
secretly, between the \textbf{shadown} and the soul"      
\end{quote}

Table \ref{tab:text1} shows the evaluation scores reported by the experts. The gold-standard evaluation score for each criterion is the median/mode of the reported evaluation scores.

\begin{table} [H]
\caption{Evaluation scores reported by the experts for Text 1.}
\label{tab:text1}
\centering
\begin{tabular}{l c c c c c c} 
\toprule
 Criterion & Expert 1 & Expert 2 & Expert 3 &  Expert 4 &  Expert 5 & Median/Mode\\
\midrule
Grammar   & 1 & 0 & 1 & 0 & 1 & 1\\
Clarity   & 2 & 2 &	2 &	1 &	2 & 2\\
Relevance & 2 & 2 & 2 & 2 & 2 & 2\\ 
\bottomrule
\end{tabular}
\end{table}

\subsubsection*{Text 2}

An excerpt from ``The Cow" by \cite{Taylor:2010}. Intentionally misspelled words are highlighted in bold.

\begin{quote}
``THANK you, \textbf{prety} cow, that made \\
\textbf{Plesant} milk to soak my bread, \\
Every day and every night,  \\
Warm, and fresh, and sweet, and white." 
\end{quote}

Table \ref{tab:text2} shows the evaluation scores reported by the experts. The gold-standard evaluation score for each criterion is the median/mode of the reported evaluation scores.

\begin{table} [H]
\caption{Evaluation scores reported by the experts for Text 2.}
\label{tab:text2}
\centering
\begin{tabular}{l c c c c c c} 
\toprule
 Criterion & Expert 1 & Expert 2 & Expert 3 &  Expert 4 &  Expert 5 & Median/Mode\\
\midrule
Grammar   & 1 & 1 & 1 & 1 & 1 & 1\\
Clarity   & 2 & 2 &	2 &	1 &	2 & 2\\
Relevance & 1 & 0 & 0 & 1 & 1 & 1\\ 
\bottomrule
\end{tabular}
\end{table}

\subsubsection*{Text 3}

Random words in a semi-structured way. Each line starts with a noun followed by a verb in a wrong verb form. All the words in the same line start with a similar letter in order to mimic a poetic writing style.

\begin{quote}
``Baby bet binary boundaries bubbles \\
Carlos cease CIA conditionally curve \\
Daniel deny disease domino dumb \\
Faust fest fierce forced furbished"
\end{quote}

Table \ref{tab:text3} shows the evaluation scores reported by the experts. The gold-standard evaluation score for each criterion is the median/mode of the reported evaluation scores.

\begin{table} [H]
\caption{Evaluation scores reported by the experts for Text 3.}
\label{tab:text3}
\centering
\begin{tabular}{l c c c c c c} 
\toprule
 Criterion & Expert 1 & Expert 2 & Expert 3 &  Expert 4 &  Expert 5 & Median/Mode\\
\midrule
Grammar   & 0 & 1 & 0 & 0 & 0 & 0\\
Clarity   & 0 & 0 &	0 &	0 &	0 & 0\\
Relevance & 0 & 1 & 0 & 0 & 0 & 0\\ 
\bottomrule
\end{tabular}
\end{table}



\bibliographystyle{plainnat} 
\bibliography{bibliography} 



\end{document}